\documentclass[aps,twocolumn,amsmath,amssymb,nofootinbib,superscriptaddress]{revtex4-1}
\usepackage{amsfonts,amssymb,amsmath,amsthm,mathrsfs}
\usepackage{bbold,mathtools}

\usepackage[pdftex]{graphicx}
\usepackage{mathrsfs}
\usepackage[colorlinks]{hyperref}

\newcommand{\bra}[1]{\langle#1|}
\newcommand{\ket}[1]{|#1\rangle}

\newtheorem{theorem}{Theorem}
\newtheorem{lemma}[theorem]{Lemma}

\DeclareMathOperator{\tr}{Tr}

\def\Pr{{\rm Pr} }
\def\>{\rangle}

\def\<{\langle}

\begin{document}

\title{Homomorphic encryption of linear optics quantum computation on almost arbitrary states of light with asymptotically perfect security}

\author{Yingkai Ouyang}
\email[]{y.ouyang@sheffield.ac.uk}
\homepage{http://www.qmetrology.com}
\affiliation{Department of Physics \& Astronomy, University of Sheffield, Sheffield, S3 7RH, United Kingdom}
\affiliation{Singapore University of Technology and Design, 8 Somapah Road, Singapore 487372}
\affiliation{Centre for Quantum Technologies, National University of Singapore, 3 Science Drive 2, Singapore 117543}

\author{Si-Hui Tan}
\affiliation{Singapore University of Technology and Design, 8 Somapah Road, Singapore 487372}
\affiliation{Centre for Quantum Technologies, National University of Singapore, 3 Science Drive 2, Singapore 117543}
\affiliation{Horizon Quantum Computing, 79 Ayer Rajah Crescent, Singapore 139955}

\author{Joseph Fitzsimons}
\affiliation{Singapore University of Technology and Design, 8 Somapah Road, Singapore 487372}
\affiliation{Centre for Quantum Technologies, National University of Singapore, 3 Science Drive 2, Singapore 117543}
\affiliation{Horizon Quantum Computing, 79 Ayer Rajah Crescent, Singapore 139955}

\author{Peter P. Rohde}
\email[]{dr.rohde@gmail.com}
\homepage{http://www.peterrohde.org}
\affiliation{Centre for Quantum Software \& Information (QSI), Faculty of Engineering \& Information Technology, University of Technology Sydney, NSW 2007, Australia}
\affiliation{Hearne Institute for Theoretical Physics and Department of Physics \& Astronomy, Louisiana State University, Baton Rouge, LA 70803, United States}

\date{\today}

\frenchspacing

\begin{abstract}
Future quantum computers are likely to be expensive and affordable outright by few, motivating client/server models for outsourced computation. However, the applications for quantum computing will often involve sensitive data, and the client would like to keep her data secret, both from eavesdroppers and the server itself. Homomorphic encryption is an approach for encrypted, outsourced quantum computation, where the client's data remains secret, even during execution of the computation. We present a scheme for the homomorphic encryption of arbitrary quantum states of light with no more than a fixed number of photons, under the evolution of both passive and adaptive linear optics, the latter of which is universal for quantum computation. The scheme uses random coherent displacements in phase-space to obfuscate client data. In the limit of large coherent displacements, the protocol exhibits asymptotically perfect information-theoretic secrecy. The experimental requirements are modest, and easily implementable using present-day technology.
\end{abstract}

\maketitle

\section{Introduction}

In the upcoming quantum era, it is to be expected that client/server models for quantum computing will emerge, owing to the high expected cost of quantum hardware. This necessitates the ability for a client (Alice), possessing data she wants processed, to outsource the computation to a host (Bob), who possesses the costly quantum computer. In such a model, security will be a major concern. The types of applications to which quantum computing will initially be most relevant will contain sensitive data, whether it be strategically important information, or valuable intellectual property, or confidential personal information. This raises the important question of how Alice can outsource computation of her data such that no adversary Eve, or even the server Bob, can read her data -- she trusts no one!

Homomorphic encryption is a cryptographic protocol that achieves this objective. Alice sends encrypted data to Bob, who processes it in encrypted form, before returning it to Alice. The essential feature is that computing the data does not require first decrypting it -- it remains encrypted throughout the computation, ensuring that even if Bob is compromised, Alice retains integrity of her data.

Classical homomorphic encryption has only been described very recently \cite{Gen09,DGHV10,GHS12}, and a number of results for homomorphic quantum computation have been described \cite{PhysRevLett.109.150501,BJe15,ouyang2015quantum,tan2016quantum,tan2017practical,lai2017statistically,DSS16,alagic2017quantum}. In the case of universal quantum computation, such protocols require a degree of interaction between Alice and Bob. However, it was shown in \cite{PhysRevLett.109.150501} that under certain restricted, non-universal models for quantum computation, homomorphic encryption may be implemented passively, without any client/server interaction, and requiring only separable, non-entangling encoding/decoding operations. In that protocol, in which single photons encode data, random polarisation rotations on Alice's input photonic state obfuscate data from Bob. And in \cite{tan2017practical}, a similar protocol was presented using phase-key encoding, whereby random rotations in phase-space obfuscate Alice's data, encoded into coherent states.

These two protocols are limited in their security by the fact that the rotations in phase-/polarisation-space are correlated across all inputs, thereby limiting the entropy of the encoded input states, and hence its security. For example, with $m$ optical modes, polarisation-key encoding is only able to hide \mbox{$O(\mathrm{log}(m))$} bits of information, falling far short of our utopian ideal of perfect information theoretic security (i.e hiding all $m$ bits of information in the case of 0 or 1 photons per mode).

The polarisation- and phase-key homomorphic encryption techniques are specific examples of a more general framework for encryption, whereby the encoding and decoding operations commute with the computation, thereby mitigating the need for elaborate interactive protocols.

Here we consider an alternate technique that supersedes both polarisation- and phase-key encoding -- \textit{displacement key encoding}, whereby random coherent displacements obfuscate optically-encoded quantum information.
This idea has been recently explored by Marshall {\em et al.} \cite{weedbrook-encryption-2016},
where it was argued heuristically why the scheme might be secure. Based on experimental data generated, Marshall {\em et al.} numerically showed that the mutual information between the encrypted and the unencrypted data can be made small as the variance of the random displacements increases. This encouraging evidence suggests that a displacement key encoding might offer perfect security in the asymptotic limit. 
However, obtaining analytical bounds to quantify the security of the scheme has been recognized to be a challenging issue, yet to be solved.

In this paper, we rigorously obtain explicit bounds on the security of using a displacement key encoding, thereby confirming the intuition of Ref.~\cite{weedbrook-encryption-2016}.
Moreover, the displacement key encoding improves on the earlier polarisation- and phase-key techniques in two important respects. 
First, we demonstrate that by choosing the encoding displacement operators to be independent on each optical mode and to follow a Gaussian distribution with an increasing variance, 
any pair of encoded codewords will become increasingly close in trace-distance and thereby increasingly indistinguishable. 
 Our encoding scheme is a weak information-theoretic security encryption scheme with secrecy error that is twice of this maximum trace distance, and this security definition has been introduced in \cite[Definition 5]{lai2018generalized}. 
We also remark that the trace-distance metric we use is preferable to the mutual information used in Ref.~\cite{weedbrook-encryption-2016}, because the trace distance directly quantifies the indistinguishability of quantum states while the mutual information does not.
Second, our technique is applicable to linear optics computations acting on quantum states of light with no more than a fixed number of photons.
Constraining quantum states to have no more than a fixed number of photons is reasonable, because quantum states that are bounded in energy can always be well approximated by quantum states that bounded in photon number, given that sufficiently many photons is considered.
 This is far more general than polarisation-key encoding, which applies to single-photon input states, or phase-key encoding, which applies to input coherent states.

\section{Commutative homomorphic encryption of passive linear optics} 
 A linear optics network \cite{RevModPhys.79.135}, comprising only beamsplitters and phase-shifters, implements a photon-number-preserving unitary map on the photonic creation operators,
\begin{align}
	\hat{U} \hat{a}_i^\dag \hat{U}^\dag \to \sum_{j=1}^m U_{i,j} \hat{a}_j^\dag,
\end{align}
where $\hat{a}^\dag_i$ is the creation operator for the $i$th mode, there are $m$ optical modes, and $U$ is an $\mathrm{SU}(m)$ matrix characterising the linear optics network.

Bob possesses both the hardware and software for implementing the computation ($\hat{U}$), 	which Alice would like applied to her input state ($\ket\psi_\text{in}$), yielding the computed output state (\mbox{$\ket\psi_\text{out} = \hat{U} \ket\psi_\text{in}	$}). 

Before sending her input state to Bob, Alice, who has limited quantum resources, wishes to encode her input state using operations separable across all modes, similarly for decoding, i.e we rule out entangling gates for Alice. To achieve this, we require the commutation relation,
\begin{align} \label{eq:enc_dec_comm}
	\hat{U} \left[\bigotimes_{i=1}^m \hat{E}_i(k) \right] = \left[\bigotimes_{i=1}^m {\hat{E}'_i(k)} \right] \hat{U},
\end{align}
to hold, where $\hat{E}_i(k)$ ($\hat{E}_i'(k)$) is the encoding (decoding) operation, with key $k$. 
Since Alice has limited classical computational power, she should determine the encoding/decoding operations efficiently with a classical computer, and implement these operations efficiently.
The model is summarised in Fig.~\ref{fig:encryption_model}.
\begin{figure}[!htb]
	\includegraphics[width=0.8\columnwidth]{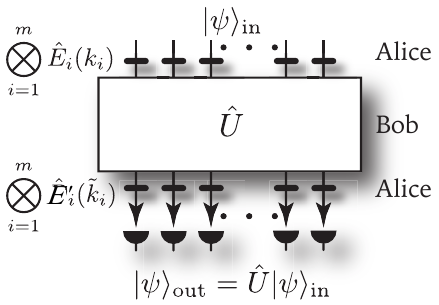}
	\caption{General protocol for commuting homomorphic encryption of optical states under linear optics evolution $\hat{U}$, where $\hat{E}_i$ ($\hat{E}_i'$) are the encoding (decoding) operations, which we require to be separable.} \label{fig:encryption_model}
\end{figure}

The most natural examples of schemes complying with this model are ones where systems encoding quantum information comprise two subsystems: a primary one in which the computation is taking place; and, a secondary independent one, which does not directly couple with the primary and is unaffected by the computational operations. This allows us to exploit the secondary subsystem (e.g polarisation) to control the entropy of our codewords, without affecting the computation in the primary subsystem (e.g photon-number).

\section{Displacement-key encoding} 
Phase-space displacement operations satisfy the required commutation relation of Eq.~(\ref{eq:enc_dec_comm}). The displacement operation adds coherent amplitude to an optical state, thereby translating it in phase-space. This process is described by the unitary displacement operator, given by,
\begin{align}
	\hat{D}(\alpha) = \mathrm{exp}\left[\alpha\hat{a}^\dag - \alpha^*\hat{a}\right].
\end{align}

Displacement operations are easily experimentally implemented using a low-reflectivity beamsplitter and a coherent state (well approximated by a laser source) \cite[Eq (9.15)]{kok2010introduction}, of the form,
\begin{align}
	\ket\alpha = e^{-\frac{|\alpha|^2}{2}} \sum_{n=0}^\infty \frac{\alpha^n}{\sqrt{n!}}	\ket{n},
\end{align}
(see Fig.~\ref{fig:disp_circ}). The displacement amplitude is directly proportional to the coherent state amplitude and the beamsplitter reflectivity. A special case of displaced states are displaced vacuum states, which are identically coherent states of the same amplitude, \mbox{$\hat{D}(\alpha)\ket{0}=\ket\alpha$}.

\begin{figure}[!htb]
\includegraphics[width=0.5\columnwidth]{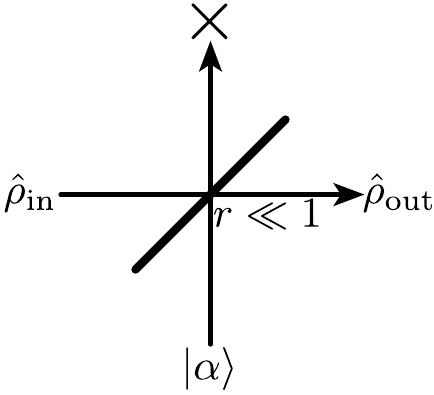}
\caption{Experimental realisation of the displacement operator. A strong coherent state ($\ket\alpha$) is incident on an extremely low reflectivity ($r$) beamsplitter, where it is mixed with the input state ($\hat\rho_\mathrm{in}$). The output state ($\hat\rho_\mathrm{out}$) is now given by the input state, displaced by amplitude $r\alpha$.}\label{fig:disp_circ}
\end{figure}

The commutation relation between displacement operators and linear optics evolution 
relates the output displacement amplitudes $\vec\beta = (\beta_1, \dots, \beta_m)$ to the input displacement amplitudes $\vec\alpha = (\alpha_1 , \dots, \alpha_m)$, and is given by
\begin{align}
	\hat{U} \hat D(\vec \alpha) = \hat D(\vec \beta) \hat{U},	
\end{align}
where $\hat D(\vec \beta) = \bigotimes_{j=1}^m \hat{D}(\beta_j)$, 
$\hat D(\vec \alpha) = \bigotimes_{j=1}^m \hat{D}(\beta_j)$, 
and $\vec\beta$ relates to $\vec\alpha$ according to the unitary map
\begin{align}
	\vec\beta = U\cdot\vec\alpha.
\end{align}
The computation required for Alice to determine her decoding operations from her encoding operations is simple matrix multiplication, which is efficiently computable \cite{arora2009computational}.
Thus, our condition on the complexity of encoding/decoding is satisfied.

An input tensor product of displacement operations with amplitudes $\vec\alpha$ on multiple modes may be be reversed by applying inverse displacement operations with amplitudes $\vec\beta$ at the output, \mbox{$\hat{D}(\vec \beta)^\dag=\hat{D}(-\vec \beta)$}. Specifically,
\begin{align}
	\hat{D}(-\vec\beta) \hat{U} \hat{D}(\vec\alpha) = \hat{U},	
\end{align}
allowing the computation, $\hat{U}\ket{\psi_\mathrm{in}}$, to be recovered from the encoded computation, \mbox{$\hat{U}\hat{D}(\vec\alpha)\ket{\psi_\mathrm{in}}$}, via application of the inverse of the encoding operation.

Our scheme extends trivially to the case where the server is asked to perform any Gaussian operation, rather than only passive linear optical evolution. 
This is because displacements similarly commute with squeezing as one can see from  
\begin{align}
\hat D(\alpha) \hat S(r e^{i \theta}) = \hat S(r e^{i \theta})  \hat D(\gamma) ,
\end{align}
and all Gaussian operations can be expressed as
linear-squeezing-linear evolutions according to the Bloch-Messiah
decomposition \cite{blochmessiah,PhysRevA.71.055801-braunstein-squeezing}, together with displacements, where $\hat S( r e^{i \theta} ) $ denotes a squeezing operator with $r\ge 0$, $\theta \in \mathbb R$ and $\gamma = \alpha \cosh r + \alpha^* e^{i \theta} \sinh r$.

The decryption circuit that Alice uses is identical in structure to her encryption operation, and Alice does not need to able to perform arbitrary linear optical operations that potentially requires up to $m(m-1)/2$ beamsplitters.
Rather, Alice's decryption circuit on $m$ modes always requires only $m$ beamsplitters.
Because of this, Alice's decryption circuit has exactly the same structure as her encryption circuit.
Both the encryption and decryption circuits can then in principle be implemented using $m$ Mach-Zehnder interferometers, and such an optical circuit is independent of Bob's LOQC. 
To find out what coherent states to input into the beamsplitters for the decryption, Alice needs only to know (1) her own secret encrypting displacements, and (2) the unitary that Bob's linear optical circuit implements.

Unlike phase-key or polarisation-key encoding, where the encoding operations applied to each mode must be identical for the encryption/decryption commutation relation to hold, for displacements the amplitudes may be chosen independently for each mode, while still preserving the desired commutation relation. Intuitively, one would anticipate that the ability to choose keys independently for each mode would improve security, since the elimination of correlations between input encoding operations allows the entropy of the encoded state to be greatly increased, thereby making codewords less distinguishable.

We examine this protocol in the context of input data comprising of arbitrary pure quantum states of light
with no more than $n$ photons. In the photon-number basis this implies that,
\begin{align}
\label{eq:input_state_DV}
\ket{\psi_\mathrm{in}} = \sum_{j=0}^{n} \lambda_j \ket{j},
\quad  \hat\rho_\mathrm{in} = 
|\psi_{\mathrm{in}}\> \<\psi_{\mathrm{in}}|,
\end{align}
where $\ket{j} = \frac{1}{\sqrt{j!}}(\hat{a}^\dag)^j \ket{0}$ is a photon-number (Fock) state
and $\hat{a}^\dag$ is the photonic creation operator, and $\ket{\psi_\mathrm{in}}$ has unit norm so that $\sum_{j =0}^\infty |\lambda_j|^2 = 1.$

We consider states supported on no more than $n$ photons, because such states can well approximate states of bounded energy in the following sense.
\begin{lemma} \label{lem:boundedenergy-to-boundedphoton}
Let $\rho = \sum_{j} p_j |\phi_j\>\<\phi_j|$ be a density operator where every $|\phi_j\>$ has expected energy at most $\mu$.
Let $n \ge \mu$. Then there exists a density operator $\rho' = \sum_{j} p_j |\phi'_j\>\<\phi'_j|$ where $|\phi_j\>$ has at most $n$ photons and expected energy at most $\mu$ for every $j$, such that
\begin{align}
\| \rho- \rho' \|_1 \le 4 \sqrt{ \mu/n} + 4 \mu/n.
\end{align}
\end{lemma} 
One can see that the approximation error becomes small when $n$ becomes large for fixed $\mu$.
The proof of Lemma \ref{lem:boundedenergy-to-boundedphoton} follows trivially from Lemma \ref{lem:tracenorm-to-euclideannorm} and Lemma \ref{lem:bounded-support-approximant} in the appendix.  Lemma \ref{lem:tracenorm-to-euclideannorm} and Lemma \ref{lem:bounded-support-approximant} show respectively that the trace-distance between states can be related to the Euclidean norms, and the approximation error for pure states can be bounded using a connection with Markov's inequality. 

\section{Security proof} 
The main result of our paper is the following theorem,
which implies that our encoding scheme in the limit of large coherent displacements has weak information-theoretic security.
\begin{theorem}\label{thm:mainresultofpaper}
The trace-distance between arbitrary encrypted states with at most $n$ photons is at most $\epsilon$, where
\begin{align}
\epsilon =
\frac{1}{\sigma^2} \left(	\frac n 4 
+ 
\frac{1}{2}\left( 1 + \frac{1}{2\sigma^2}\right)^n +
  4(n+1) \right)\notag.
\end{align}
Our scheme thus is a weak information-theoretic security encryption scheme with secrecy error at most $2\epsilon$.
\end{theorem}
Our proof employs a continuous-variable (CV) representation for optical states \cite{braunstein2012quantum}. We omit some intermediate mathematical steps in the main text, delegating the complete step-by-step derivation to the appendix. 

Photon-number (Fock) states are related to the $x$ and $p$ quadrature CVs using Hermite functions \cite[Section 18.1]{arfken2013mathematical}. The Hermite polynomials are defined as,
\begin{align}
	 H_j(x)  = (-1)^n e^{x^2} \frac{d^j}{dx^j} e^{-x^2},  
\end{align}
and the corresponding Hermite functions as,
\begin{align}
	  \psi_j(x) = e^{-x^2/2}\frac{1}{\sqrt{  2^j j! \sqrt \pi  } } H_j(x). 
\end{align}
These provide the direct relation between discrete variable (DV) and CV representations of optical states. Most importantly, for Fock states we have,
\begin{align}
 	|j\rangle   = \int_{-\infty}^{\infty}\psi_j(x) |x\rangle \,dx,  
\end{align}
where $x$ is a position eigenstate in phase-space. The position eigenstates form a complete basis, satisfying,
\begin{align}
	\langle x_1|x_2\rangle = \delta(x_1-x_2).
\end{align} 
Our input state from Eq.~(\ref{eq:input_state_DV}) can therefore be expressed in the position basis as
  \begin{align}
	\hat\rho_\mathrm{in} &=
	 \!\!\!   \sum_{0\le i,j\le n} \lambda_i  \lambda^*_{j} \int_{x_1,x_2\in \mathbb R} \!\!\!  \psi_i(x_1)\psi^*_{j}(x_2) \ket{x_1}\bra{x_2}\, dx_1 dx_2.  
\end{align}  

Let Alice's encoding operation be represented by the quantum process $\mathcal{E}_\mathrm{enc}$, which applies a random complex-valued displacement, chosen from a normal distribution with zero mean and standard deviation $\sigma$. Experimentally, $\sigma$ is bounded by the energy output of coherent laser sources. An unknown encoding operation can be represented as a quantum process,
\begin{align} 
\mathcal{E}_\mathrm{enc}(\hat\rho) =  
\int_{\alpha \in \mathbb C} 
\mu(\alpha)
\hat{D}(\alpha)\hat\rho\hat{D}(\alpha)^\dag 	   d^2 \alpha ,
\end{align}
 where $ 	\mu(\alpha) = \frac{e^{-|\alpha|^2/(2\sigma^2)}}{2 \pi \sigma^2}$
is a Gaussian measure and $d^2 \alpha = d(\Re(\alpha))d(\Im(\alpha))$ indicates that the integral is performed over the real and imaginary parts of $\alpha$. 
 Then our encrypted state $\hat\rho_\mathrm{enc} =\mathcal E_\mathrm{enc} (\rho_{\mathrm{in}})  $ can be interpreted as a weighted mixture over all possible displacement amplitudes associated with the entire key-space. 
 Displacing a position eigenstate by $\alpha = u+i v$ shifts its position by $v$ and appends a phase that depends on its position, $u$ and $v$.
 After performing the integral over the imaginary part of the complex number $\alpha = u+i v$, we get
\begin{align}\label{eq:rho_enc}
\hat\rho_\mathrm{enc} 
&=\!\!   \sum_{0\le i,j\le n} \!\!\!
\lambda_i^*  \lambda_{j}  
\int_{-\infty}^{\infty}
\int_{-\infty}^{\infty}
\int_{-\infty}^{\infty}
\frac{  e^{-u^2/(4\sigma^2)} }{2\sqrt{\pi} \sigma^2}
\psi_i(x_1)\psi_{j}^*(x_2) \nonumber \\
&\times \ket{x_1+u}\bra{x_2+u} \,du\, dx_1\, dx_2.
\end{align} 
The security of the scheme can be quantified using the trace-distance between any pair of its encrypted inputs. 
When the trace-distance between a pair of states in an encryption scheme approaches zero, 
the resolution of this pair of states as perceived by Eve or Bob vanishes.
Such a scheme is said to exhibit weak information-theoretic security \cite{lai2018generalized},
and we proceed to show that our encryption scheme indeed exhibits such a form of security.

To show that the trace-distance between almost arbitrary input states with no more than a fixed number of photons approaches zero as the standard deviation of the random displacements grows, we require detailed information of every 
matrix element $\<a| \hat\rho_{\mathrm{enc}}|b\>$. 
To get a handle on $\<a| \hat\rho_{\mathrm{enc}}|b\>$,
it suffices to consider 
$\<a| \hat\rho_{i,j}|b\>$
where $\hat \rho_{i,j}= \mathcal E(|i\>\<j|)$ because 
$\hat\rho_{\mathrm{enc}} =
\sum_{0\le i,j\le n} 
\lambda_i^*  \lambda_{j}\hat \rho_{i,j} $.
Since $\<a|$ and $|b\>$ can be both expressed in terms of Hermite polynomials in the position basis,
we find that $\<a|\hat \rho_{i,j}|b\>$ is just an integral of the product of four Hermite polynomials.
To evaluate these integrals, we recall that 
any Hermite polynomial $H_j(x)$ can be expressed as the coefficient of $t^j$ in the Gaussian generating function $e^{-x^2/2+2xt-t^2}e^{-x^2/2}j!$ \cite[Eq.  18.5]{arfken2013mathematical}.
Hence, $\<a|\hat \rho_{i,j}|b\>$ may be evaluated by writing all of the Hermite polynomials in terms of their Gaussian generating functions, performing the Gaussian integrals, and then reading off the respective coefficients.
In doing so, we find the exact form of $\<a|\hat \rho_{i,j}|b\>$ in Lemma \ref{lem:exact-form} of the appendix. Namely, $\<a|\hat \rho_{i,j}|b\>$ is only non-zero when $b-a = j-i$. Moreover, we have that
\begin{align}
\<a| \hat \rho_{i,i} |a\>
&=  xy
\sum_{q = 0}^{ {\rm min}(a,i) }  
\binom a {q} 
\binom i {q} 
y^{2q} x^{a+i} ,
\end{align} 
and when $k\ge 1$, we find in Lemma \ref{lem:off-diagonal-bound} of the appendix that
\begin{align}
0 &\le 
|\<a| \hat \rho_{i,i+k} |a+k\> |\notag\\
&\le  
xy
\sum_{q = 0}^{ {\rm min}(a,i) }    
\binom {a+k} {q+k} 
\binom {i+k} {q+k} 
y^{2q+k} x^{a+i+k} ,
\label{eq:off-diagonals}
\end{align} 
where $y=\frac{1}{2\sigma^2}$ and $x  =\frac{2\sigma^2}{1+2\sigma^2}$.
Now let $T$ denote the difference between two encrypted inputs. 
Let us write $T = D+O$ in the Fock basis, where $D$ is the diagonal of $T$. From this decomposition of $T$, we will obtain an upper bound on the trace norm of $T$.
First we prove that the trace norm of $D$ is $O(\sigma^{-2})$.
To see this, we show in Lemma \ref{lem:diagonal-difference} of the appendix that
\begin{align}
&\<a| \hat \rho_{i+1,i+1} |a\> - \<a| \hat \rho_{i,i} |a\> \notag\\
=&
-xy  \<a|\hat \rho_{i,i}|a\>
\notag\\
&+
(xy) x^{a+i+1} 
\sum_{k = 1, \dots , {\rm min}\{a,i\}} 
\binom a {k}
\binom {i} {k-1}
y^{2k} .
\end{align}
We can use this fact to show in Lemma \ref{lem:diagonal-trace-distance} of the appendix that 
$\|\hat \rho_{i+1,i+1}-\hat \rho_{i,i}  \|_{1} \le 2^i \sigma^{-2}$ for $\sigma^2 \ge 2$, from which it follows from a telescoping sum that trace-distance between any pair of encrypted Fock states is at most $2^{n-1} n \sigma^{-2}$.
Next, we upper bound the trace norm of $O$.
To see this, note that the Gersgorin circle theorem \cite{varga-GCT} implies that 
$\| O\|_1$ is at most the sum of the absolute values of all its matrix elements.
By applying a summation of Eq.~(\ref{eq:off-diagonals}) over the indices $a$ and $k$ and by doing the summation in $a$ first, we can use simple binomial identities to find that
$\|O \|_1 \le 8(n+1)\sigma^{-2}$.
Together, with the triangle inequality on the trace norm of $D+O$, 
this allows us to show that the trace-distance between arbitrary encrypted states with at most $n$ photons is at most,
\begin{align}
\frac{1}{\sigma^2} \left(	\frac n 4 
+ 
\frac{1}{2}\left( 1 + \frac{1}{2\sigma^2}\right)^n +
  4(n+1) \right),
	\label{eq:trace-distance}
\end{align}
which asymptotes to zero for large maximum coherent amplitudes in the encoding operations.
This thereby proves Theorem \ref{thm:mainresultofpaper}.

When the client Alice has as her input to the scheme a separable state on $m$ modes, where each mode has at most $n$ photons, it is easy to see using a telescoping bound on the modes that the trace-distance between arbitrary multi-mode separable states is at most $m$ times of the value in \eqref{eq:trace-distance}.
Coherent states with mean photon number of up to $10^8$
can be easily generated in a cavity mode of a pumped laser \cite[Section 4.1]{hanamura2007quantum}.
Since the intensity of a laser can be attenuated with an variable attenuator, this corresponds to having $|\alpha|$ value that ranges between 0 and $10^4$, 
which allows one to create random displacements with $\sigma = 10^4$.
If each mode has at most 15 photons, then using (19), we find that the trace-distance between arbitrary encrypted states on a single mode is at most $6.8\times 10^{-7}$.

\section{Adaptive linear optics} 
Thus far, we have exclusively considered passive linear optics, where there is no measurement or feedforward. However, feedforward -- the ability to measure a subset of the optical modes, and use the measurement outcome to dynamically control the subsequent linear optics network -- is an essential ingredient in many linear optics quantum information processing protocols. For example, when employing single-photon encodings for qubits, it is well known that universal quantum computing is possible with the addition of fast-feedforward \cite{KLM2001}, which is known to require non-linearity \cite{PhysRevA.65.042304}. On the other hand, it is strongly believed that without non-linearity such as feedforward, such schemes cannot be made universal \cite{PhysRevA.65.042304}.

Let us understand intuitively how feedforward and non-linearities can enable 
two different notions of universality in quantum optical computing.
The first notion is CV universality \cite{braunstein2012quantum}, where Braunstein and Lloyd show using Baker-Campbell-Hausdorff arguments how one can in principle implement Hamiltonian evolutions that are arbitrary polynomials of quadrature operators. To achieve this notion of CV universality, it suffices to implement Gaussian unitaries which our scheme can handle natively, along with any non-Gaussian operation which can be achieved using non-linearities.
The second notion of universality is involves DV encoded within CV states, and achieving universal DV quantum computation.
In this notion of DV universality with CV states, non-linearities can help to initialize non-Gaussian states, which are resource states to be consumed during gate teleportation to produce a non-Gaussian gates.
To perform the gate teleportation, one entangles the resource state with a target mode where the non-Gaussian gate is to be computed, and subsequently measures the resource state. 
One then applies a Gaussian gate on the target mode,
conditioned on the measurement outcome.
For instance, on a GKP encoding \cite{GKP01}, a combination of non-Gaussian gates with Gaussian gates can be universal, and such gates can be achieved with feedforward operations with non-linearities.

Can we accommodate for fast-feedforward in the displacement-key homomorphic encryption protocol? Yes we can. Without loss of generality, let us imagine that we wish to measure just one mode and feedforward the measurement outcome to a subsequent round of linear optics, to be once again executed by Bob. For server Bob to perform this measurement, he would have to know the appropriate decryption operator for that mode. However, he does not have this by virtue of the protocol, and Alice cannot provide it to him, lest he misuses it to compromise security.

The only avenue to accommodating the feedforward is to make the protocol interactive. That is, whenever Bob requires a measurement result, to proceed with the computation he outsources the measurement of that mode back to Alice, who returns to him a classical result. This doesn't undermine the viability of the protocol, since Alice is already assumed to have the ability to apply decoding operations, which are by definition separable and can therefore be performed on a per-mode basis.

It is clear that any computation requiring feedforward will necessarily require turning the encryption protocol into an interactive one between Alice and Bob. While this is undesirable, it is to be expected given that no-go proofs have been provided against universal, non-interactive, fully homomorphic protocols \cite{YPF14,newman2017limitations,lai2017statistically}.

\section{Robustness} 
One might wonder how the robustness of our displacement-key encoding scheme to noise compares with the robustness of phase-key and polarization key encoding schemes. In short, because the demands on the structure of the input states of the client Alice is relatively mild, 
she can use bosonic quantum codes on a single mode \cite{GKP01,BinomialCodes2016}. If Alice uses GKP states \cite{GKP01}, so that small imperfections in displacements can be be corrected while the large random displacements can still obfuscate her data from Bob. 
To constrain the photon number per mode, one can use approximate versions \cite{approximate-gkp-1910.08301} of GKP states.
In contrast, bosonic quantum coding schemes are not immediately compatible with the previous phase-key \cite{tan2018practical} and polarization-key schemes \cite{rohde-encrypting-quantum-walk-PhysRevLett.109.150501}.  
For the polarization-key encoding which encrypts boson sampling, without quantum error correction, simulating boson sampling classically remains classically hard with very little noise \cite{bs-error-tolerance-PhysRevA.85.022332} but becomes classically simulable when there is too much noise \cite{bs-error-classical-PhysRevX.6.021039}.
The phase-key scheme \cite{tan2018practical} is only robust to loss errors when the computed states remains entirely classical, and become vulnerable to loss errors once they become entangled into cat states.

\section{Conclusion} We have presented a technique for homomorphic encryption of almost arbitrary optical states under the evolution of linear optics. The scheme requires only separable displacement operations for encoding and decoding, yet provides perfect secrecy in the limit of large displacement amplitudes. For passive linear optics, the protocol requires no client/server interaction, remaining entirely passive. For adaptive linear optics, an interactive protocol is required. The technology for implementing the encoding scheme is readily available today, making near-term demonstration of elementary encrypted optical quantum computation viable.

\section{Acknowledgments} 
Y.O. thanks Jake Iles-Smith for insightful discussions. P.P.R is funded by an ARC Future Fellowship (project FT160100397). 
This research was supported in part by the Singapore National Research Foundation under NRF Award No. NRF-NRFF2013-01.
ST acknowledges support from the Air Force Office of Scientific Research under AOARD grant FA2386-15-1-4082 and FA2386-18-1-4003. ST conducted part of this writing while she was a Guest Researcher at the Niels Bohr International Academy.
Y.O. acknowledges support from Singapore's Ministry of Education, and the US Air Force Office of Scientific Research under AOARD grant FA2386-18-1-4003.

\bibliography{paper}

\appendix

 \begin{widetext}


  \section{Preliminaries}

\subsection{Hermite polynomials}
	    
        Define the Hermite polynomials as
  \begin{align}
H_n(x)  = (-1)^n e^{x^2} \frac{d^n}{dx^n} e^{-x^2},
\end{align}
and the corresponding Hermite functions as 
\begin{align}
\psi_n(x) = e^{-x^2/2}\frac{1}{\sqrt{  2^n n! \sqrt \pi  } } H_n(x).
\end{align}

\subsection{The action of a displacement operator on a position eigenstate}
The displacement operator can be written as
\begin{align}
D(\alpha) = \exp(\alpha \hat a ^\dagger - \alpha^* \hat a),
\end{align}
where $\alpha= u+iv$ is a complex number, with $u,v \in \mathbb R$.
Now the position and momentum operators 
which admit representations as $x$ and $\frac{1}{i} \frac{d}{dx}$ respectively 
can also be written as dimensionless quadratures $X_1$ and $X_2$ respectively 
which can be related to the ladder operators via the equalities
\begin{align}
a  &= \frac{1}{\sqrt 2} (\hat X_2 - i \hat X_1) , \quad
a ^\dagger = \frac{1}{\sqrt 2} (\hat X_2 + i \hat X_1) ,
\end{align}
which implies that 
$\hat X_1 =  \frac{1}{\sqrt 2} (\hat a ^\dagger - \hat a) $
 and 
$\hat X_2 = \frac{1}{i\sqrt 2} \left(\hat a ^\dagger - \hat a \right)$ respectively.
Then 
\begin{align}
[\hat X_1, \hat X_2]
&=
\frac{1}{2i}[\hat a ^\dagger+ \hat a , \hat a ^\dagger - \hat a ] \notag\\
&=
\frac{1}{2i}
\left(
[\hat a ^\dagger, \hat a ^\dagger - \hat a ] 
+
[\hat a , \hat a ^\dagger - \hat a ] 
\right)
\notag\\
&=
\frac{1}{2i}
\left(
[\hat a ^\dagger, - \hat a ] 
+
[\hat a ,  \hat a ^\dagger ] 
\right) \notag\\
&=
\frac{-1}{i}
[\hat a ^\dagger, \hat a ] \notag\\
&= i.
\end{align}
Since $[\hat X_1,\hat X_2]=i$ the dimensionless quadrature operators 
$\hat X_1$ and $\hat X_2$ indeed satisfy the canonical commutation relations.
We then write the displacement operator in terms of the quadrature operators to get
\begin{align}
D(\alpha) 
&= \exp(\alpha (\frac{1}{\sqrt 2}\hat X_1 - \frac i {\sqrt  2} \hat X_2) - \alpha^* (\frac 1 {\sqrt 2} \hat X_1 + \frac{i}{\sqrt 2} \hat X_2) ) \notag\\
&= \exp( (\alpha - \alpha^*) \hat X_1 / \sqrt 2 + i \hat X_2(-\alpha - \alpha^* )/\sqrt 2 ) \notag\\
&= \exp( \sqrt 2iv \hat X_1 - i \hat X_2 \sqrt 2u ) \notag\\
&= \exp( i (\sqrt 2 v \hat X_1 -  \sqrt 2 u \hat X_2  ) ) .
\end{align}
Now recall that the BCH formula for operators $A,B$ whose commutator is proportional to the identity operator is
$e^{i(A+B)} = e^{iA}e^{iB}e^{-\frac{i^2}{2}[A,B]}=
e^{iA}e^{iB}e^{[A,B]/2} $.
Hence
\begin{align}
D(\alpha) 
&= e^{ i \sqrt 2 v \hat X_1}e^{- i \sqrt 2 u \hat X_2 }e^{[\sqrt 2v\hat X_1,-\sqrt 2 u \hat X_2]/2} \notag\\
&= e^{ i \sqrt 2 v \hat X_1}e^{- i \sqrt 2 u \hat X_2 }e^{-uv [\hat X_1,\hat X_2]} \notag\\
&= e^{ i \sqrt 2 v \hat X_1}e^{- i \sqrt 2 u \hat X_2 }e^{-i uv } .
\end{align}
Now let $|x\>_1$
 denote an eigenstate of the quadrature operator $\hat X_1$ with eigenvalue $x$,
 so that $\hat X_1 |x\>_1 = x |x\>_1$.
 Then it is clear that $e^{i\theta X_1} |x\>_1 = e^{i \theta x} |x\>_1$.
 The position eigenstate can be written in the momentum basis, which is also its Fourier basis, so 
 \begin{align}
|x\>_1 = \frac{1}{\sqrt{2\pi}}\int_{-\infty}^\infty dp e^{-ipx} |p\>_2,
\end{align}
where $|p\>_2$ denotes an eigenstate of the second quadrature operator $\hat X_2$ with eigenvalues $p$. Hence
\begin{align}
e^{i \theta \hat X_2} |x\>_1
&= \frac{1}{\sqrt{2\pi}}\int_{-\infty}^\infty dp e^{-ipx} e^{i \theta \hat X_2} |p\>_2 \notag\\
&= \frac{1}{\sqrt{2\pi}}\int_{-\infty}^\infty dp e^{-ipx} e^{i \theta p} |p\>_2 \notag\\
&= \frac{1}{\sqrt{2\pi}}\int_{-\infty}^\infty dp e^{-ip(x-\theta)}  |p\>_2 \notag\\
&= |x-\theta\>_1.
\end{align}
Hence it follows that 
\begin{align}
D(u+iv)|x\>_1
 &=
 e^{ i \sqrt 2 v \hat X_1}e^{- i \sqrt  2 u \hat X_2 }e^{-i uv } |x\>_1 \notag\\
&=
 e^{ i \sqrt 2 v \hat X_1}e^{-i uv } |x+\sqrt  2 u\>_1 \notag\\
&=
 e^{ i \sqrt 2 v(x+u)}e^{-i uv } |x+\sqrt 2 u\>_1 \notag\\	
&=
 e^{ i \sqrt 2 v x}e^{i (\sqrt 2 -1 ) uv } |x+\sqrt 2u\>_1 .
\end{align}
 
  \section{Representation of the encrypted state}

\begin{lemma}
\label{lem:rho-representation}
Let $|\psi\rangle  = \sum_{i=0}^n \lambda_i |i\rangle $ for any $\lambda_i \in \mathbb C$ such that $\<\psi|\psi\>=1$.
Let $\mathcal E$ be the encryption operation that randomly displaces with a complex number $u+iv$, where $u$ and $v$ are chosen independently from normal distributions with mean 0 and standard deviation $\sigma$.
Let $\rho_{\mathrm{enc}} = \mathcal E(|\psi\>\<\psi|)$. Then
\begin{align}
\hat \rho_{\mathrm{enc}}
=
\sum_{i=0}^n \sum_{j=0}^n  \lambda_i  \lambda_j^* 
\frac{1}{2 \sqrt \pi \sigma} 
\int_{-\infty}^\infty
du\ 
e^{-\frac{u^2}{4 \sigma^2}} 
\int_{-\infty}^{\infty} dx  \int_{-\infty}^{\infty} dy\ 
e^{- \sigma ^2 (x-y)^2}
\psi_i(x)\psi_j(y) |x+ u\rangle \langle y+u| . 
\end{align}
\end{lemma}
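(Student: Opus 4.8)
The plan is to reduce the channel action on the full density matrix $\ket\psi\bra\psi = \sum_{i,j} \lambda_i \lambda_j^* \ket i \bra j$ to its action on the individual rank-one terms $\ket i \bra j$, and then pass to the position basis where the displacement operator acts explicitly. First I would write each Fock state in the position representation via $\ket i = \int \psi_i(x) \ket x \, dx$, so that $\ket i \bra j = \int\int \psi_i(x)\psi_j(y) \ket x \bra y \, dx\, dy$, using that the Hermite functions are real so that no conjugation of $\psi_j$ is needed; this already produces the factors $\lambda_i\lambda_j^*$ and $\psi_i(x)\psi_j(y)$ appearing in the statement.

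The central computation is the conjugated operator $\hat D(\alpha) \ket x \bra y \hat D(\alpha)^\dagger$. Here I would invoke the displacement action derived in the preliminaries, namely $\hat D(u+iv)\ket x = e^{i\sqrt 2 v x} e^{i(\sqrt 2 - 1) uv} \ket{x+\sqrt 2 u}$. Taking the adjoint for the $\bra y$ factor and multiplying, the position-independent phase $e^{i(\sqrt 2 - 1)uv}$ cancels against its conjugate, leaving $\hat D(\alpha)\ket x \bra y \hat D(\alpha)^\dagger = e^{i\sqrt 2 v(x-y)} \ket{x+\sqrt 2 u}\bra{y+\sqrt 2 u}$, which depends on $v$ only through a single linear phase. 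This cancellation is the crucial structural observation, since without it the surviving phase would spoil the clean Gaussian integral that follows.

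With the measure $\mu(\alpha)\,d^2\alpha = (2\pi\sigma^2)^{-1} e^{-(u^2+v^2)/(2\sigma^2)} \,du\,dv$, I would integrate over $v$ first. The $v$-integral is a standard Gaussian with a linear imaginary phase, $\int e^{-v^2/(2\sigma^2)} e^{i\sqrt 2 v(x-y)}\,dv = \sqrt{2\pi}\,\sigma\, e^{-\sigma^2(x-y)^2}$, which is exactly where the damping factor $e^{-\sigma^2(x-y)^2}$ in the statement originates. The remaining $u$-integral carries the weight $e^{-u^2/(2\sigma^2)}$ and the displaced kets $\ket{x+\sqrt 2 u}$. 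A change of variables $u \mapsto u/\sqrt 2$ then rescales the argument of the kets to $\ket{x+u}$, turns the exponent into $e^{-u^2/(4\sigma^2)}$, and produces a Jacobian $1/\sqrt 2$; combined with the surviving prefactors $(2\pi\sigma^2)^{-1}\sqrt{2\pi}\sigma$ this collapses to $1/(2\sqrt\pi\,\sigma)$, matching the stated normalization.

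The main obstacle is purely bookkeeping: tracking the several $\sqrt 2$ factors consistently---they enter through the displacement action, through the Gaussian integral, and through the final substitution---so that $(2\pi\sigma^2)^{-1}$, $\sqrt{2\pi}\sigma$, and the Jacobian $1/\sqrt 2$ combine correctly (one uses $\sqrt 2 \cdot \sqrt{2\pi} = 2\sqrt\pi$). No genuine analytic difficulty arises; beyond the phase cancellation in the conjugation step and the single Gaussian identity, the argument is a routine reorganization of the integrals.
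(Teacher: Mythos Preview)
Your proposal is correct and follows essentially the same route as the paper: expand Fock states in the position basis, conjugate $\ket{x}\bra{y}$ by $\hat D(u+iv)$ using the preliminary formula (with the global phase $e^{i(\sqrt 2 -1)uv}$ cancelling), perform the Gaussian $v$-integral to produce $e^{-\sigma^2(x-y)^2}$, and rescale $u$ to obtain the final normalization. The only cosmetic difference is ordering---the paper rescales both $u$ and $v$ by $1/\sqrt 2$ \emph{before} doing the $v$-integral, whereas you integrate $v$ first and then rescale $u$---but the computations are equivalent and your prefactor bookkeeping is correct.
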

\begin{proof}
Note the Fock states can be written in the position basis, so that for all non-negative integers $i$ we have
  \begin{align}
|i\rangle   = \int_{-\infty}^{\infty} dx  \ \psi_i(x) |x\rangle_1 .
\end{align}
 Then
$|\psi\rangle  \langle \psi|= \sum_{i=0}^n \sum_{j=0}^n  \lambda_i  \lambda_j^* |i\rangle \langle  j|$.
Expanding this out in the position basis, and dropping the labels on the first quadrature eigenstates, we get
\begin{align}
|\psi\rangle  \langle \psi|
= 
\sum_{i=0}^n \sum_{j=0}^n  \lambda_i  \lambda_j^* 
\int_{-\infty}^{\infty} dx_1  \int_{-\infty}^{\infty} dx_2\ 
\psi_i(x_1)\psi_j(x_2) |x_1\rangle \langle  x_2|.
\end{align}
Then for real $u$ and $v$, we get
\begin{align}
D(u+iv) |\psi\rangle  \langle \psi| D(u+iv) ^\dagger
&=
\sum_{i=0}^n \sum_{j=0}^n  \lambda_i  \lambda_j^* 
\int_{-\infty}^{\infty} dx_1  \int_{-\infty}^{\infty} dx_2\ 
e^{i\sqrt 2 v(x_1-x_2)}
\psi_i(x_1)\psi_j(x_2) |x_1+\sqrt 2 u\rangle \langle  x_2+\sqrt 2 u|.
\end{align} 
Encrypting the state $|\psi\>\<\psi|$ and changing the variable with respect to $u$ then gives
\begin{align}
\hat \rho_{\mathrm{enc}}
&= 
\sum_{i=0}^n 
\sum_{j=0}^n  \lambda_i  \lambda_j^*  
\frac{1}{2\pi \sigma^2} 
\int_{-\infty}^\infty
\int_{-\infty}^\infty
du\ dv\ e^{-\frac{u^2 + v^2}{2\sigma^2}} 
\int_{-\infty}^{\infty} dx_1  \int_{-\infty}^{\infty} dx_2\ 
e^{i\sqrt 2 v(x_1-x_2)}
\psi_i(x_1)\psi_j(x_2) |x_1+\sqrt 2 u\rangle \langle x_2+\sqrt 2 u| \notag\\
&= 
\sum_{i=0}^n \sum_{j=0}^n 
\lambda_i  \lambda_j^*  
\frac{1}{4\pi \sigma^2} 
\int_{-\infty}^\infty
\int_{-\infty}^\infty
du\ dv\ e^{-\frac{u^2 + v^2}{4\sigma^2}} 
\int_{-\infty}^{\infty} dx_1  \int_{-\infty}^{\infty} dx_2\ 
e^{i v(x_1-x_2)}
\psi_i(x_1)\psi_j(x_2) |x_1+ u\rangle \langle x_2+u| .
\end{align}
We can perform the integral with respect to $v$ to arrive at 
\begin{align}
\hat \rho_{\mathrm{enc}}
&= 
\sum_{i=0}^n \sum_{j=0}^n  \lambda_i  \lambda_j^* 
\frac{1}{4\pi \sigma^2} 
\int_{-\infty}^\infty
du\ 
e^{-\frac{u^2}{4\sigma^2}} 
\int_{-\infty}^{\infty} dx_1  \int_{-\infty}^{\infty} dx_2\ 
2 \sqrt{\pi } \sigma  e^{- \sigma ^2 (x_1-x_2)^2}
\psi_i(x_1)\psi_j(x_2) |x_1+ u\rangle \langle x_2+u|.
\end{align}
Simplifying the above and relabeling the variables in the integration then gives the result.
\end{proof}

\section{Integrals of products of Hermite polynomials}

The following lemma gives a bound for the exponential suppression of a certain integral of products of Hermite polynomials in the orders of the some of the Hermite polynomials. The key tools used here are generating functions for the Hermite polynomials, and this leads to a significant improvement of bounding the absolute value of the integral of product of Hermite functions over that in Ref.~\cite{ouyang2013truncated}.

Now let us define the integral
\begin{align}
I_{a,b,i,j} = 
\frac{1}{2 \sqrt \pi \sigma} 
\int_{-\infty}^\infty
du\ 
e^{-\frac{u^2}{4 \sigma^2}} 
\int_{-\infty}^{\infty} dx  
\int_{-\infty}^{\infty} dy\ 
e^{- \sigma ^2 (x-y)^2}
\psi_i(x)\psi_j(y) 
\psi_a(x+u)\psi_b(y+u),
\end{align}
so that 
\begin{align}
\<a| \hat \rho_{\rm enc}|b\> = 
\sum_{i=0}^n \sum_{j=0}^n
\lambda_i \lambda_j^* I_{a,b,i,j}.
\end{align}
If we encrypt another state of the form $\sum_{i=0}^n \mu_i |i\>$,
then the difference between the two matrix elements will be
\begin{align}
\sum_{i=0}^n \sum_{j=0}^n
\left(
\lambda_i \lambda_j^* I_{a,b,i,j}
-
\mu_i \mu_j^* I_{a,b,i,j}
\right)
=
\sum_{i=0}^n \sum_{j=0}^n
\left(
\lambda_i \lambda_j^* -\mu_i \mu_j^* 
\right)
I_{a,b,i,j}.
\end{align}
Define $\hat \rho_i = \mathcal E(|i\>\<i|)$, 
and define 
$\hat \rho_{i,j} = \mathcal E(|i\>\<j|)$.
Then
\begin{align}
\<a|\hat \rho_{i,j} |b\>  = I_{a,b,i,j}.
\end{align}
Clearly for $\rho = \sum_{i,j} \lambda_i \lambda_j^* |i\>\<j|$,
by linearity of the encryption operation, 
\begin{align}
\hat \rho  = \mathcal E(\rho) =  \sum_{i,j} \lambda_i \lambda_j^*  \hat \rho_{i,j}.
\end{align}

We use the method of generating functions to evaluate the exact form for the integral $I_{a,b,i,j}$.
\begin{lemma}
\label{lem:exact-form}
 Let $a,b,i,j$ be non-negative integers and $\sigma>0$.  Let $x= \frac{2 \sigma^2}{1+2\sigma^2} $ and $ y=1/(2 \sigma^2)$  .
Then 
\begin{align}
\<a|\hat \rho_{i,j} |b\>  =I_{a,b,i,j} = \frac{1}{1+2\sigma^2}
\sum_{\substack{
i_1,i_2,i_3,i_4 \ge 0\\
i_1+i_2 =a\\
i_1+i_3 =b\\
i_2+i_4 =i\\
i_3+i_4 =j\\
}} 
y^{i_2+i_3}
\sqrt{ 
\binom a {i_2} \binom b {i_3} \binom i {i_2} \binom j {i_3}
} 
\sqrt{x^{a+b+i+j}}. \label{eq:lemma2}
\end{align}  
\end{lemma}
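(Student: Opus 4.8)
The plan is to convert the four--Hermite-function integral $I_{a,b,i,j}$ into a single multivariate Gaussian integral by packaging the entire Fock-indexed family $\{\psi_n\}$ into one analytic generating function, performing that integral in closed form, and extracting the desired coefficient only at the very end. Concretely, I would use the identity
\begin{align}
\Phi(s,x) = \frac{1}{\pi^{1/4}}e^{-x^2/2+\sqrt2\,sx-s^2/2} = \sum_{n\ge0}\frac{s^n}{\sqrt{n!}}\psi_n(x),
\end{align}
which follows from the standard Hermite generating function $e^{2x\tau-\tau^2}=\sum_n H_n(x)\tau^n/n!$ after the substitution $\tau=s/\sqrt2$. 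Thus $\psi_n(x)=\sqrt{n!}\,[s^n]\Phi(s,x)$, and replacing $\psi_a,\psi_b,\psi_i,\psi_j$ by four independent generating variables $s,t,p,q$ yields
\begin{align}
I_{a,b,i,j}=\sqrt{a!\,b!\,i!\,j!}\;[s^a t^b p^i q^j]\,\mathcal I(s,t,p,q),
\end{align}
where $\mathcal I$ is the same triple integral over $u,x,y$ but with the four Hermite functions replaced by $\Phi(p,x)\Phi(q,y)\Phi(s,x+u)\Phi(t,y+u)$.

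Next I would carry out the Gaussian integral. Collecting all exponents, the integrand is $\exp(-\tfrac12 z^TAz+b^Tz+c)$ with $z=(x,y,u)$, where $b$ is linear in $(s,t,p,q)$, the constant is $c=-\tfrac12(s^2+t^2+p^2+q^2)$, and
\begin{align}
A=\begin{pmatrix} 2+2\sigma^2 & -2\sigma^2 & 1\\ -2\sigma^2 & 2+2\sigma^2 & 1\\ 1 & 1 & 2+\tfrac{1}{2\sigma^2}\end{pmatrix}.
\end{align}
The standard formula $\int d^3z\,e^{-\frac12 z^TAz+b^Tz+c}=(2\pi)^{3/2}(\det A)^{-1/2}e^{\frac12 b^TA^{-1}b+c}$ then reduces everything to algebra. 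A first consistency check is the prefactor: one computes $\det A=2(1+2\sigma^2)^2/\sigma^2$, and combining it with the $\tfrac{1}{2\sqrt\pi\sigma}$ and the four factors of $\pi^{-1/4}$ collapses to exactly the $\tfrac{1}{1+2\sigma^2}$ appearing in the lemma.

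The heart of the argument is the quadratic form $\tfrac12 b^TA^{-1}b+c$. I would compute $A^{-1}$ via cofactors and substitute $b$. The structural claim I expect to verify is that this form is extremely rigid: every diagonal term $s^2,t^2,p^2,q^2$ cancels against $c$, the two non-adjacent cross terms $sq$ and $tp$ vanish identically, and only the four-cycle bonds survive, giving
\begin{align}
\tfrac12 b^TA^{-1}b+c = x\,st + xy\,sp + xy\,tq + x\,pq,
\end{align}
with $x=\tfrac{2\sigma^2}{1+2\sigma^2}$ and $y=\tfrac{1}{2\sigma^2}$ exactly as defined. Exponentiating and expanding as four independent exponential series indexed by $i_1$ (bond $st$), $i_2$ (bond $sp$), $i_3$ (bond $tq$), $i_4$ (bond $pq$), the extraction $[s^a t^b p^i q^j]$ forces precisely the four linear constraints $i_1+i_2=a$, $i_1+i_3=b$, $i_2+i_4=i$, $i_3+i_4=j$, with weight $x^{i_1+i_2+i_3+i_4}y^{i_2+i_3}/(i_1!i_2!i_3!i_4!)$. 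Finally I would use the constraints to rewrite $\sqrt{a!\,b!\,i!\,j!}/(i_1!i_2!i_3!i_4!)=\sqrt{\binom{a}{i_2}\binom{b}{i_3}\binom{i}{i_2}\binom{j}{i_3}}$ and $x^{i_1+i_2+i_3+i_4}=\sqrt{x^{a+b+i+j}}$, recovering the stated formula.

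The main obstacle is this last structural computation: inverting $A$ and confirming both the exact cancellation of all four diagonal terms and the vanishing of the $sq$ and $tp$ cross terms. These cancellations are what turn a generic ten-term quadratic form into the clean bonded sum, and they hinge on exact relations among the entries of $A^{-1}$ together with the factorisation $4\sigma^4+4\sigma^2+1=(1+2\sigma^2)^2$; a sign slip or a misindexed cofactor would silently destroy the combinatorial structure, so this is the step that most needs care.
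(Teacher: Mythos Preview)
Your proposal is correct and follows essentially the same route as the paper: replace all four Hermite functions by their generating function, carry out the resulting Gaussian integral in $(x,y,u)$, and extract the coefficient $[s^a t^b p^i q^j]$ from the four-bond exponential. The only cosmetic differences are that the paper uses the unnormalised generating variable (so a stray $\sqrt{2^{a+b+i+j}}$ is carried and cancelled at the end) and performs the three Gaussian integrals iteratively via $\int e^{-ay^2-by}\,dy=\sqrt{\pi/a}\,e^{b^2/(4a)}$ rather than in one shot with $A^{-1}$; the key structural outcome---that only the four cross terms $st,\,sp,\,tq,\,pq$ survive with coefficients $x,\,xy,\,xy,\,x$---is identical.
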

\begin{proof}   Let $I=I_{a,b,i,j}$.
The generating function of the Hermite polynomial is given by
\begin{align}
\exp(-x^2/2 +2xt - t^2 ) = \sum_{n=0}^\infty  e^{-x^2/2} H_n(x) t^n /n!.
\end{align}
Hence, using the notation $[t^n] f(t)$ to denote the coefficient of $t^n$ in an analytical function $f(t)$, we get
\begin{align}
 H_n(x) = [t^n] \exp( 2xt - t^2 )  n! .
\end{align}
Recall that
\begin{align}
\psi_n(x) = e^{-x^2/2}\frac{1}{\sqrt{  2^n n! \sqrt \pi  } } H_n(x).
\end{align}
Now
\begin{align}
&\psi_i(x) \psi_j(y)  \psi_a(x+u) \psi_b(y+u) \notag\\
=
&
e^{-(x^2+y^2)/2}
e^{-((x+u)^2+(y+u)^2)/2}
  \frac{  H_i(x) H_j(y) H_a(x+u) H_b(y+u)  }
  {\pi \sqrt{  2^{i+j+a+b} i!j!a!b!   } } 
  \notag\\
=
&
[s^i t^j f^a g^b]
  \frac{   \sqrt{ i!j!a!b! }  }
  {\pi \sqrt{  2^{i+j+a+b}  } } 
e^{-x^2/2 +2xs - s^2 }
e^{-y^2/2 +2yt - t^2 } 
e^{-(x+u)^2/2 +2(x+u)f - f^2 } 
e^{-(y+u)^2/2 +2(y+u)g - g^2 } .
\end{align}
Hence
\begin{align}
I&= 
[s^i t^j f^a g^b]
  \frac{   \sqrt{ i!j!a!b! }  }
  {\pi \sqrt{  2^{i+j+a+b}  } } 
  \frac{1}{2 \sqrt \pi \sigma} 
\int_{-\infty}^\infty
du\ 
\int_{-\infty}^{\infty} dx  
\int_{-\infty}^{\infty} dy\ 
e^{-\frac{u^2}{4 \sigma^2}} 
e^{- \sigma ^2 (x-y)^2}
e^{-x^2/2 +2xs - s^2 }
e^{-y^2/2 +2yt - t^2 } 
\notag\\
& \times e^{-(x+u)^2/2 +2(x+u)f - f^2 } 
e^{-(y+u)^2/2 +2(y+u)g - g^2 } .
\end{align}
This integral can be easily performed.
We make use of the identity 
\begin{align}
\int_{-\infty}^{\infty} dy\ e^{-a y^2 -by }
=
\sqrt{\frac{\pi }{a}}  e^{\frac{b^2}{4 a}} ,
\end{align}
where $a >0$. 
Using this identity repeatedly, we can show that 
\begin{align}
I=   \frac{   \sqrt{ i!j!a!b! }  }
  { \sqrt{  2^{i+j+a+b}  } } \alpha F
\end{align}
where $\alpha = 1/(1+2\sigma^2)$ and
\begin{align}
F   
&=
[s^i t^j f^a g^b]
\exp\left[ 
\alpha \left(
4 f g \sigma ^2 + 2 f s +2 g t + 4 s t \sigma ^2   
   \right) 
\right].
\end{align}
By writing the exponential in $F$ as a product of four exponentials, and using the Taylor series expansion for each, we have
\begin{align}
F   
&=
[s^i t^j f^a g^b]
e^{ \alpha 4 f g \sigma ^2 }e^{\alpha 2 f s} e^{\alpha 2 g t} e^{\alpha 4 s t \sigma ^2   } \notag\\
&=
[s^i t^j f^a g^b]
\sum_{i_1,i_2,i_3,i_4 \ge 0}
\frac{ (\alpha 4 f g \sigma ^2)^{i_1} }{i_1!}
\frac{(\alpha 2 f s)^{i_2}}{i_2!}
\frac{(\alpha 2 g t)^{i_3}}{i_3!}
\frac{(\alpha 4 s t \sigma ^2 )^{i_4}  }{i_4!}.
\end{align}
By extracting the coefficients, we get
\begin{align}
F   
&=
\sum_{\substack{
i_1,i_2,i_3,i_4 \ge 0\\
i_1+i_2 =a\\
i_1+i_3 =b\\
i_2+i_4 =i\\
i_3+i_4 =j\\
}}
\frac{ (\alpha 4  \sigma ^2)^{i_1} }{i_1!}
\frac{(\alpha 2 )^{i_2}}{i_2!}
\frac{(\alpha 2 )^{i_3}}{i_3!}
\frac{(\alpha 4  \sigma ^2 )^{i_4}  }{i_4!}
\notag\\
&=
\sum_{\substack{
i_1,i_2,i_3,i_4 \ge 0\\
i_1+i_2 =a\\
i_1+i_3 =b\\
i_2+i_4 =i\\
i_3+i_4 =j\\
}}
\frac{ (2  \sigma ^2)^{i_1} }{i_1!}
\frac{(1 )^{i_2}}{i_2!}
\frac{(1 )^{i_3}}{i_3!}
\frac{(2  \sigma ^2 )^{i_4}  }{i_4!}
(2\alpha)^{i_1+i_2+i_3+i_4}
\notag\\
&=
\sum_{\substack{
i_1,i_2,i_3,i_4 \ge 0\\
i_1+i_2 =a\\
i_1+i_3 =b\\
i_2+i_4 =i\\
i_3+i_4 =j\\
}}
\frac{ (2  \sigma ^2)^{i_1+i_4} }
{i_1! i_2! i_3! i_4!}
(2\alpha)^{i_1+i_2+i_3+i_4}.
\end{align}
Clearly, $i_1+i_2+i_3+i_4 = \frac{	a+b+i+j}{2}$.
Thus
\begin{align}
F&=
\sum_{\substack{
i_1,i_2,i_3,i_4 \ge 0\\
i_1+i_2 =a\\
i_1+i_3 =b\\
i_2+i_4 =i\\
i_3+i_4 =j\\
}}
\frac{ (2  \sigma ^2)^{i_1+i_4} }
{i_1! i_2! i_3! i_4!}
\sqrt{(2\alpha)^{a+b+i+j}}
=
\sum_{\substack{
i_1,i_2,i_3,i_4 \ge 0\\
i_1+i_2 =a\\
i_1+i_3 =b\\
i_2+i_4 =i\\
i_3+i_4 =j\\
}}
\frac{ (2  \sigma ^2)^{i_1+i_4} }
{i_1! i_2! i_3! i_4!}
\sqrt{2^{a+b+i+j}}
\sqrt{\frac{1}{(1+2\sigma^2)^{a+b+i+j}}}
\end{align}
for $\alpha = 1/(1+2\sigma^2)$.
Note that
\begin{align}
F =
\sum_{\substack{
i_1,i_2,i_3,i_4 \ge 0\\
i_1+i_2 =a\\
i_1+i_3 =b\\
i_2+i_4 =i\\
i_3+i_4 =j\\
}} (2 \sigma^2)^{-i_2-i_3}
\frac{\sqrt{2^{a+b+i+j}}}{i_1! i_2! i_3! i_4!}
\sqrt{\left( \frac{2 \sigma^2}{1+2\sigma^2} \right)^{a+b+i+j}}.
\end{align}
Now note that $i_1 = a-i_2$, $i_1 = b-i_3$, $i_4 = i - i_2$ and $i_4 = j-i_3$, which implies that
\begin{align}
i_1! i_2! i_3! i_4! = \sqrt{(a-i_2)!(b-i_3)!i_2! i_2! i_3!i_3! (i-i_2)! (j-i_3)! }.
\end{align}
Therefore
\begin{align}
\frac{\sqrt{a!b!i!j!}}{i_1! i_2! i_3! i_4!}
&=
\sqrt{ 
\frac{a!b!i!j!}{(a-i_2)!(b-i_3)!i_2! i_2! i_3!i_3! (i-i_2)! (j-i_3)!}
} \notag\\
&=
\sqrt{ 
\binom a {i_2} \binom b {i_3} \binom i {i_2} \binom j {i_3}
}  .
\end{align}
Making appropriate substitutions then completes the proof.
\end{proof}
  
\section{Towards the proof of the indistinguishability bound}  
The key result that we rely on is the result from Lemma \ref{lem:exact-form} which gives an exact form for $I_{a,b,i,j}$ in terms of $y = 1/(2 \sigma^2)$ and $x = \frac{2\sigma^2}{1+2\sigma^2}$.
Now let $b=a+k$ for $k \ge 0$. 
Then observe that $I_{a,b,i,j}=0$ unless $j = i+k$. Hence we restrict our attention to this case. 
Then we have 
\begin{align}
I_{a,a+k,i,i+k}
&=
\frac{1}{1+2\sigma^2}
\sum_{i_2 = 0, \dots , {\rm min}(a,i)} 
\sqrt{
\binom a {i_2}
\binom {a+k} {i_2+k}
\binom i {i_2}
\binom {i+k} {i_2+k}
} 
y^{2i_2+k} x^{a+i+k} .
\end{align}
To see this, Lemma \ref{lem:exact-form}. Recall that the subscripts for the summation in Eq (\ref{eq:lemma2}) must satisfy the equalities
\begin{align}
i_1+i_2 &= a  \label{eq:(i)} \\
i_1+i_3 &= b  \label{eq:(ii)} \\
i_2+i_4 &= i  \label{eq:(iii)} \\
i_3+i_4 &= j  \label{eq:(iv)} .
\end{align}
We can then get
\begin{align}
(\ref{eq:(ii)})-(\ref{eq:(i)})&:   b-a=i_3-i_2 \\
(\ref{eq:(iv)})-(\ref{eq:(iii)})&: j-i = i_3-i_2.
\end{align}
Hence $b-a = j-i.$
So if $b=a+k$, then $(a+k)-a=j-i$ which implies that $k=j-i$ and hence $j=i+k$. Hence whenever $j \neq i+k$, there will be nothing in the summation of (\ref{eq:lemma2}) to sum over, and the summation in that case evaluates to zero.

Before we proceed, we provide the proofs of several simple but useful technical lemmas.
The first technical lemma we need is the following combinatorial identity.
\begin{lemma}
\label{lem:summation-binomial}
Let $0<x<1$, and let $k$ be a non-negative integer. 
Then $\sum_{a \ge k} x^a \binom a k = \frac{x^k}{(1-x)^{k+1}} $.
\end{lemma}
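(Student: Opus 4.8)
The plan is to prove the identity by establishing a first-order recursion in $k$ for the power series $S_k(x) := \sum_{a \ge k} \binom{a}{k} x^a$, and then solving that recursion against an easy base case. Since $0 < x < 1$, each such series converges absolutely (the binomial coefficient $\binom{a}{k}$ grows only polynomially in $a$ while $x^a$ decays geometrically), so all of the term rearrangements and index shifts below are justified; I would state this convergence fact at the outset.

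First I would dispose of the base case $k = 0$: here $\binom{a}{0} = 1$, so $S_0(x) = \sum_{a \ge 0} x^a = \frac{1}{1-x}$, which is exactly $\frac{x^0}{(1-x)^{0+1}}$.

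Next, for $k \ge 1$ I would apply Pascal's rule $\binom{a}{k} = \binom{a-1}{k-1} + \binom{a-1}{k}$ inside the sum defining $S_k(x)$ and split it into two pieces. Re-indexing the first piece with $a' = a-1$ yields $x \sum_{a' \ge k-1}\binom{a'}{k-1} x^{a'} = x\, S_{k-1}(x)$. For the second piece the same shift gives $x\sum_{a' \ge k-1}\binom{a'}{k} x^{a'}$; because $\binom{k-1}{k}=0$ the lower limit may be raised to $a' \ge k$, so this piece equals $x\, S_k(x)$. Collecting terms gives $S_k(x) = x\,S_{k-1}(x) + x\,S_k(x)$, that is,
\begin{align}
S_k(x) = \frac{x}{1-x}\, S_{k-1}(x).
\end{align}

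Finally I would iterate this recursion down to the base case to obtain $S_k(x) = \left(\frac{x}{1-x}\right)^k \frac{1}{1-x} = \frac{x^k}{(1-x)^{k+1}}$, which is the claimed formula. The step I expect to require the most care is the index manipulation in the second piece: one must verify that the boundary term $a' = k-1$ contributes nothing (which it does, since $\binom{k-1}{k} = 0$) before raising the summation limit, and one must confirm absolute convergence up front so that splitting and re-indexing an infinite sum is legitimate. Beyond that bookkeeping, the argument is routine.
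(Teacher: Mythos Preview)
Your proof is correct. The recursion $S_k(x) = \frac{x}{1-x}\,S_{k-1}(x)$ obtained from Pascal's rule, together with the geometric-series base case, does the job cleanly, and your remarks on absolute convergence and the vanishing boundary term $\binom{k-1}{k}=0$ are the right checks to make.

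The paper takes a different, more analytic route: it writes $\binom{a}{k} = \frac{1}{k!}(a)(a-1)\cdots(a-k+1)$, shifts the index so the summand becomes $\frac{1}{k!}(a+k)\cdots(a+1)x^{a+k}$, and then recognises this as $\frac{x^k}{k!}\,\frac{d^k}{dx^k}\,x^{a+k}$. Summing over $a$ and differentiating the geometric series $\sum_{a\ge 0}x^{a+k} = \frac{x^k}{1-x}$ (after noting that $\frac{x^k}{1-x}$ differs from $\frac{1}{1-x}$ by a polynomial of degree $k-1$, which is killed by the $k$th derivative) gives the result. Your argument is purely combinatorial and avoids calculus altogether, which is arguably more elementary; the paper's generating-function/differentiation approach is a little slicker in that it requires no induction or recursion, but it leans on termwise differentiation of a power series. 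Both are standard and equally valid here.
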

\begin{proof}
First note that by relabeling the index for the summation,
the sum in the lemma is equal to $\frac{1}{k!} \sum_{a\ge 0} x^{a+k} (a+k)\dots(a+1) 
= \frac{x^k}{k!} \frac{d^k}{dx^k} \sum_{a \ge 0} x^{a+k}$.
By use the generating function $1/(1-x)$ which holds because $|x|<1$, the summation becomes 
$\frac{x^k}{k!} \frac{d^k}{dx^k} \frac{x^k}{1-x}$. Simplifying this using the fact that $1-x^k = (1-x)(1+\dots +x^{k+1})$ yields the result.
\end{proof}
The next technical lemma we need also involves binomial coefficients.
\begin{lemma}
\label{lem:binomial-difference}
Let $x=(2\sigma^2)/(1+2\sigma^2)$, and let $i$ and $k$ be non-negative integers such that $0 \le k \le i-1$. Then
$\binom{i+1}{k}x - \binom i k =  \binom i k\left( \frac{k x}{i-k+1} - \frac{1}{2\sigma^2+1} \right).$
\end{lemma}
\begin{proof}
Note that
$\binom{i+1}{k}x - \binom i k = \frac{k x}{i-k+1} - \frac{1}{2\sigma^2+1} $
is equal to 
$\binom i k\left(
\frac{(i+1)x}{i-k+1} - 1
\right)$.
Next it is easy to see that 
$\frac{i+1}{i-k+1} = 1+ \frac{k}{i-k+1} $.
Hence 
\begin{align}
\binom{i+1}{k}x - \binom i k =
\binom i k\left(
x + x \frac{k}{i-k+1} -1
\right) =
\binom i k\left(
  \frac{-1}{2\sigma^2+1}+   \frac{kx}{i-k+1}
\right) \notag
\end{align}
which proves the result.
\end{proof}

Note the trivial fact that $\sum_{k\ge 0 } x^k = 1+2\sigma^2$.
Let us consider the case of $k=0$ first, which corresponds to $i=j$. Hence we consider the non-zero matrix elements of $\hat \rho_i$, which are $\<a|\hat \rho_i|a\>$ for $a = 0, 1,\dots,$.
Notice then that we have
\begin{align}
\<a|\hat \rho_i|a\>
&= I_{a,a,i,i} \notag\\
&=
\frac{1}{1+2\sigma^2}
\sum_{i_2 = 0, \dots , {\rm min}(a,i)} 
\binom a {i_2}
\binom i {i_2}
y^{2i_2} x^{a+i} .
\end{align}
We are then in a position to bound the trace distance between $\hat \rho_{i+1}$ and $\hat \rho_i$ for every integer $i$. In the lemma that follows, we only consider positive integer $i$, because the case of $i=0$ has already been shown earlier.
\begin{lemma}
\label{lem:diagonal-difference}
Let $i$ and $a$ be any non-negative integer. Let $x=(2\sigma^2)/(1+2\sigma^2)$ and $y = 1/(2\sigma^2)$ for $\sigma > 0$. Then 
\begin{align}
\<a|\hat \rho_{i+1}|a\> -\<a| \hat \rho_i |a\>
&=
\frac{- \<a|\hat \rho_i|a\>}{2\sigma^2+1}
+
\frac{x^{a+i+1} }{1+2\sigma^2}
\sum_{k = 1, \dots , {\rm min}\{a,i\}} 
\binom a {k}
\binom {i} {k-1}
y^{2k}   .
\end{align} 

\end{lemma}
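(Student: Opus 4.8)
The plan is to compute both diagonal matrix elements from the closed form already in hand and subtract them term by term. Concretely, I would begin by invoking the diagonal specialization of Lemma~\ref{lem:exact-form} (the formula for $I_{a,a,i,i}$ derived just above), which gives
\begin{align}
\<a|\hat\rho_i|a\> = \frac{x^{a+i}}{1+2\sigma^2}\sum_{k=0}^{\min(a,i)}\binom{a}{k}\binom{i}{k}y^{2k},
\end{align}
together with the same expression for $\<a|\hat\rho_{i+1}|a\>$ obtained by replacing $i$ with $i+1$ (the extra factor $x$ appearing through $x^{a+i+1}$). Adopting the convention $\binom{i}{k}=0$ for $k>i$ or $k<0$, both sums may be written over a common index range, so that the difference collapses into a single sum whose summand, after factoring out the common $\binom{a}{k}y^{2k}x^{a+i}/(1+2\sigma^2)$, is governed entirely by the combination $\binom{i+1}{k}x-\binom{i}{k}$.

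The core algebraic step is to rewrite this combination via Lemma~\ref{lem:binomial-difference}, which equates $\binom{i+1}{k}x-\binom{i}{k}$ with $\binom{i}{k}\bigl(\tfrac{kx}{i-k+1}-\tfrac{1}{2\sigma^2+1}\bigr)$, and then to simplify the first piece using the elementary factorial identity $\binom{i}{k}\tfrac{k}{i-k+1}=\binom{i}{k-1}$. This turns the summand into a sum of two clean contributions, one proportional to $\binom{i}{k}$ and one to $\binom{i}{k-1}$. Splitting the sum along these two pieces, the $\binom{i}{k}$ part reassembles — using the closed form above — into precisely $-\<a|\hat\rho_i|a\>/(2\sigma^2+1)$, while the $\binom{i}{k-1}$ part yields $\tfrac{x^{a+i+1}}{1+2\sigma^2}\sum_k\binom{a}{k}\binom{i}{k-1}y^{2k}$, with the $k=0$ term vanishing because $\binom{i}{-1}=0$; a relabelling $k\mapsto k$ then brings this into the stated form.

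The step I expect to be the main obstacle is the careful bookkeeping of the summation limits at the boundary. Lemma~\ref{lem:binomial-difference} is stated only for $0\le k\le i-1$, so I must separately check the endpoints: at $k=i$ the identity still holds after noting $x-1=-1/(2\sigma^2+1)$, while $\tfrac{kx}{i-k+1}$ becomes singular at $k=i+1$ and the factor $\binom{i}{k-1}$ there contributes a genuine $k=i+1$ term whenever $a\ge i+1$. The cleanest way to sidestep this ambiguity is to bypass the $\tfrac{kx}{i-k+1}$ form altogether and instead apply Pascal's rule $\binom{i+1}{k}=\binom{i}{k}+\binom{i}{k-1}$ directly, giving the single identity $\binom{i+1}{k}x-\binom{i}{k}=-\binom{i}{k}/(2\sigma^2+1)+x\binom{i}{k-1}$ valid for every $k$. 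I would use this to settle the correct upper limit of the residual sum, being careful to confirm that the boundary index $k=i+1$ is accounted for consistently and that no term is dropped or double counted.
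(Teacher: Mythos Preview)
Your approach is the same as the paper's: both start from the diagonal specialization of Lemma~\ref{lem:exact-form} and reduce the difference to the combination $\binom{i+1}{k}x-\binom{i}{k}$. The paper evaluates this via Lemma~\ref{lem:binomial-difference} together with a case split on whether $a\le i$ or $a>i$ (the latter picking up the extra $i_2=i+1$ term of the $\hat\rho_{i+1}$ sum by hand). Your Pascal-rule shortcut $\binom{i+1}{k}x-\binom{i}{k}=-\binom{i}{k}/(2\sigma^2+1)+x\binom{i}{k-1}$ is cleaner: it holds for every $k$ under the usual conventions and removes both the case split and the need for Lemma~\ref{lem:binomial-difference}.

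Your caution about the boundary is also well placed and in fact flags a small slip in the paper's stated formula. When $a>i$, the term at $k=i+1$, namely $\binom{a}{i+1}\binom{i}{i}y^{2i+2}$, is nonzero; the paper's own case-2 computation produces it, so the residual sum should run to $\min\{a,i+1\}$ rather than $\min\{a,i\}$. This is harmless downstream, since Lemma~\ref{lem:diagonal-trace-distance} immediately extends the sum to all $k\ge 1$ via the convention $\binom{a}{k}=0$ for $k>a$, but your bookkeeping would yield the correct upper limit.
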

\begin{proof}
To prove this, we consider two scenarios. In one scenario, $a$ is small in the sense that $ a\le i$. In the other scenario, $a > i$. 

When $a \le i$, using Lemma \ref{lem:binomial-difference}, we have the following
\begin{align}
\<a|\hat \rho_{i+1}|a\> - \<a|\hat \rho_i|a\>
&= 
\frac{1}{1+2\sigma^2}
\sum_{i_2 = 0, \dots , a} 
\binom a {i_2}
\binom {i+1} {i_2}
y^{2i_2} x^{a+i+1} 
-
\frac{1}{1+2\sigma^2}
\sum_{i_2 = 0, \dots , a} 
\binom a {i_2}
\binom {i} {i_2}
y^{2i_2} x^{a+i} \notag\\ 
&=
\frac{1}{1+2\sigma^2}
\sum_{i_2 = 0, \dots , a} 
\binom a {i_2}
\binom {i} {i_2}
y^{2i_2} x^{a+i} 
\left(
\frac{i_2 x}{i-i_2+1} - \frac{1}{2\sigma^2+1}
\right).\notag
\end{align}
Using the expansion for $\<a|\hat \rho_i|a\>$, we then get
\begin{align}
\<a|\hat \rho_{i+1}|a\> - \<a|\hat \rho_i|a\>
&=  
\frac{- \<a|\hat \rho_i|a\>}{2\sigma^2+1}
+
\frac{1}{1+2\sigma^2}x^{a+i} 
\sum_{i_2 = 1, \dots , a} 
\binom a {i_2}
\binom {i} {i_2-1}
y^{2i_2} x .\label{lem:case1}
\end{align} 
Now we proceed to consider the case when $a > i$. 
Then we can use Lemma \ref{lem:binomial-difference} again to get
\begin{align}
\<a|\hat \rho_{i+1}|a\> - \<a|\hat \rho_i|a\>
&= 
\frac{1}{1+2\sigma^2}
\sum_{i_2 = 0, \dots , i+1} 
\binom a {i_2}
\binom {i+1} {i_2}
y^{2i_2} x^{a+i+1} 
-
\frac{1}{1+2\sigma^2}
\sum_{i_2 = 0, \dots , i} 
\binom a {i_2}
\binom {i} {i_2}
y^{2i_2} x^{a+i} \notag\\ 
&=
\frac{1}{1+2\sigma^2}
\sum_{i_2 = 0, \dots , i} 
\binom a {i_2}
\binom {i} {i_2}
y^{2i_2} x^{a+i} 
\left(
\frac{i_2 x}{i-i_2+1} - \frac{1}{2\sigma^2+1}
\right)
+\frac{1}{1+2\sigma^2}
\binom a {i+1} y^{2i+2} x^{a+i+1}.\notag
\end{align}
Hence we get 
\begin{align}
\<a|\hat \rho_{i+1}|a\> - \<a|\hat \rho_i|a\>
&=  
\frac{- \<a|\hat \rho_i|a\>}{2\sigma^2+1}
+\frac{1}{1+2\sigma^2}
x^{a+i+1} 
\sum_{i_2 = 1, \dots , i} 
\binom a {i_2}
\binom {i} {i_2-1}
y^{2i_2} 
+\binom a {i+1} y^{2i+2} x^{a+i+1}\notag\\
&= 
\frac{- \<a|\hat \rho_i|a\>}{2\sigma^2+1}
+\frac{1}{1+2\sigma^2}
x^{a+i+1} y^2
\sum_{i_2 = 0, \dots , i} 
\binom a {i_2+1}
\binom {i} {i_2}
y^{2i_2} ,\label{lem:case2}
\end{align}
and the result follows from (\ref{lem:case1}) and (\ref{lem:case2}). 
\end{proof} 
The trace distance between $\hat \rho_{i+1}$ and $\hat \rho_i$ is suppressed with increasing $\sigma$, as we shall now show.
\begin{lemma}
\label{lem:diagonal-trace-distance}
The trace distance between $\hat \rho_{i+1}$ and $\hat \rho_i$ is 
\begin{align}
\frac{1}{2}
\|\hat \rho_{i+1} - \hat \rho_i\|_1 
\le
\frac{1}{2(1+2\sigma^2)} + \frac{1}{4\sigma^2}\left( 1 + \frac{1}{2\sigma^2}\right)^i.\notag
\end{align}
\end{lemma}
\begin{proof}
Since $\hat \rho_{i+1}$ and $\hat \rho_i$ are diagonal matrices in the number basis, we have
\begin{align}
 \|\hat \rho_{i+1} - \hat \rho_i\|_1 
&= \sum_{a \ge 0 }
 \left|
\<a| \rho_{i+1} |a\> - \<a| \hat \rho_i |a\>
\right|.
\end{align}
Using Lemma \ref{lem:diagonal-difference} for the exact form of $\<a| \rho_{i+1} |a\> - \<a| \hat \rho_i |a\>$, we get
\begin{align}
 \|\hat \rho_{i+1} - \hat \rho_i\|_1 
&\le 
\sum_{a \ge 0 }
\frac{  \<a|\hat \rho_i|a\>}{2\sigma^2+1}
+
\sum_{a \ge 0 }
\frac{x^{a+i+1} }{1+2\sigma^2}
\sum_{k = 1, \dots , {\rm min}\{a,i\}} 
\binom a {k}
\binom {i} {k-1}
y^{2k}  \notag\\
&\le 
\sum_{a \ge 0 }
\frac{  \<a|\hat \rho_i|a\>}{2\sigma^2+1}
+
\sum_{a \ge 0 }
\frac{x^{a+i+1} }{1+2\sigma^2}
\sum_{k = 1}^{\infty} 
\binom a {k}
\binom {i} {k-1}
y^{2k}  ,
\end{align}
where $\binom a {k} = 0$ for all $k >a$.
The first summation above is trivial to bound because the trace of a density matrix must be one, so one must have $\sum_{a \ge 0} \<a|\hat \rho_i|a\>= 1$.
For the second summation, we can use Lemma \ref{lem:summation-binomial} 
to get
\begin{align}
 \|\hat \rho_{i+1} - \hat \rho_i\|_1 
&\le  
\frac{1}{2\sigma^2+1}
+ 
\frac{x^{ i+1} }{1+2\sigma^2}
\sum_{k = 1} ^\infty
\frac{x^k}{(1-x)^{k+1}}
\binom {i} {k-1}
y^{2k} .
\end{align}
Since $x/(1-x) = 2\sigma^2=y^{-1}$ and $1/(1-x) = 2\sigma^2+1$ for $x = (2\sigma^2)/(1+2\sigma^2)$, we get
\begin{align}
 \|\hat \rho_{i+1} - \hat \rho_i\|_1 
&\le  
\frac{1}{2\sigma^2+1}
+ 
 x^{ i+1} 
\sum_{k = 1}  ^{i+1}
\binom {i} {k-1}
y^{ k} .
\end{align}
Now $\sum_{k = 1}^{i+1}
\binom {i} {k-1}
y^{ k} =
y \sum_{k = 0}^i  \binom {i} {k} y^{ k} = y(1+y)^i $.
Thus using the fact that $x \le 1$, $\frac{1}{2\sigma^2+1} \le \frac{1}{2\sigma^2}$ and $y = 1/(2\sigma^2)$, we get
\begin{align}
 \|\hat \rho_{i+1} - \hat \rho_i\|_1 
&\le  
\frac{1}{1+2\sigma^2} + \frac{1}{2\sigma^2}\left( 1 + \frac{1}{2\sigma^2}\right)^i,
\end{align}
and the result follows.
\end{proof}
Clearly then by the telescoping sum, the trace distance between any pair of encrypted diagonal states can be easily bounded.
\begin{lemma} 
\label{lem:any-diagonal-trace-distance}
Let $n$ be any positive integer, and let $i$ and $j$ be non-negative integers such that $i < j \le n$. Then for $\sigma  > 0$, the trace distance between $\hat \rho_{i}$ and $\hat \rho_j$ is at most 
\begin{align}
\frac{1}{2} \|\hat \rho_{i } - \hat \rho_j\|_1 
\le
\frac{j}{4\sigma^2} + \frac{1}{2\sigma^2} 
\left( 1 + \frac{1}{2\sigma^2}\right)^j .
\end{align}
\end{lemma}
\begin{proof}
One just needs to write $
(\hat \rho_{i } -  \hat \rho_{i+1} )
+ \dots  + 
( \hat \rho_{j-1} - \hat \rho_j) .$
There are at most $n$ such bracketed terms, so using the triangle inequality with Lemma \ref{lem:diagonal-trace-distance} gives 
\begin{align}
 \|\hat \rho_{i } - \hat \rho_j\|_1 
&=
\| (\hat \rho_{i } -  \hat \rho_{i+1} )
+ \dots  + 
( \hat \rho_{j-1} - \hat \rho_j) \|_1\notag\\
&\le
\| \hat \rho_{i } -  \hat \rho_{i+1} \|_1
+ \dots  + 
\| \hat \rho_{j-1} - \hat \rho_j) \|_1\notag\\ 
&\le
\frac{j-i}{1+2\sigma^2} + \frac{1}{2\sigma^2}
\sum_{k=i}^{j-1}
\left( 1 + \frac{1}{2\sigma^2}\right)^k
\notag\\
&\le
\frac{j}{1+2\sigma^2} + \frac{1}{\sigma^2} 
\left( 1 + \frac{1}{2\sigma^2}\right)^j 
\notag\\
&\le
\frac{j}{2\sigma^2} + \frac{1}{\sigma^2} 
\left( 1 + \frac{1}{2\sigma^2}\right)^j .
\end{align}
This proves the result.
\end{proof}
We now proceed to obtain a bound on the off-diagonal matrix elements $\rho_{i,j}$. Without loss of generality, assume that $j=i+k$ for $k \ge 0$. 
To analyze this case, we first consider the following technical lemma that is easy to verify.
\begin{lemma}
\label{lem:binomial-upper-bound}
Let $a,i,i_2$ and $k$ be non-negative integers, and let $i_2\le a,i$. Then 
\begin{align}
\binom a {i_2}
\binom i {i_2}
\le
\binom {a+k} {i_2 + k}
\binom {i+k} {i_2 + k}  .
\end{align}
\end{lemma}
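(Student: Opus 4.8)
The claim is the inequality $\binom{a}{i_2}\binom{i}{i_2} \le \binom{a+k}{i_2+k}\binom{i+k}{i_2+k}$ for non-negative integers with $i_2 \le a$ and $i_2 \le i$. The plan is to prove it one binomial factor at a time, since the two factors have the same structure: it suffices to show $\binom{a}{i_2} \le \binom{a+k}{i_2+k}$ for all valid $a, i_2, k$, and then apply the identical statement with $a$ replaced by $i$, multiplying the two inequalities (both sides being non-negative).

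To establish the single-factor inequality $\binom{a}{i_2} \le \binom{a+k}{i_2+k}$, I would write out the ratio of the two binomial coefficients explicitly. Using the factorial form,
\begin{align}
\frac{\binom{a+k}{i_2+k}}{\binom{a}{i_2}}
=
\frac{(a+k)!\, i_2!\, (a-i_2)!}{(i_2+k)!\,(a-i_2)!\,a!}
=
\frac{(a+k)!\, i_2!}{(i_2+k)!\, a!}
=
\prod_{\ell=1}^{k} \frac{a+\ell}{i_2+\ell},
\end{align}
where in the last step I have paired off the $k$ extra factors in numerator and denominator. Since $i_2 \le a$, each factor $\frac{a+\ell}{i_2+\ell} \ge 1$, so the entire product is at least $1$. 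This gives $\binom{a+k}{i_2+k} \ge \binom{a}{i_2}$ immediately, and the product form makes the monotonicity transparent.

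The main step is simply recognizing the telescoping cancellation that reduces the ratio to the product $\prod_{\ell=1}^{k}\frac{a+\ell}{i_2+\ell}$; once that is in hand, the inequality $i_2 \le a$ does all the work and there is no genuine obstacle. One minor point to check is the boundary behaviour: if $i_2 = a$ then $\binom{a}{i_2} = 1$ and the right side is $\binom{a+k}{a+k} = 1$ (for that factor), so equality holds and the bound is still valid; the product formula handles this case correctly since every factor equals $1$. Combining the two single-factor inequalities then completes the proof.
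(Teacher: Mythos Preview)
Your proof is correct and follows essentially the same approach as the paper: both arguments compute the ratio $\binom{a}{i_2}\big/\binom{a+k}{i_2+k}$ and show it is at most $1$ using $i_2\le a$ (and likewise for $i$). The paper packages the ratio as $\binom{i_2+k}{k}\big/\binom{a+k}{k}$, whereas you write it as the telescoping product $\prod_{\ell=1}^{k}\frac{i_2+\ell}{a+\ell}$; these are the same quantity, and your version is arguably the cleaner of the two.
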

\begin{proof}
It is easy to see that $
\binom a k 
\binom i k 
=
\binom {a+k} k 
\binom {i+k} k 
\frac{ \binom {i_2 +k}{k}^2 } { \binom{a+k}{k}  \binom{i+k}{k}  }.$
Next observe that since $i_2\le a$ and $i_2\le i$, 
we have $\frac{ \binom {i_2 +k}{k}^2 } { \binom{a+k}{k}  \binom{i+k}{k}  }\le 1$.
\end{proof}
Using Lemma \ref{lem:binomial-upper-bound} we can arrive derive bounds for the off-diagonal matrix elements $\hat \rho_{i,j}$.
\begin{lemma}
\label{lem:off-diagonal-bound}
Let $i,k$ be non-negative integers and let $j=i+k$. 
Then for $\sigma>0$,
\begin{align}
\sum_{a \ge 0 }
|\<a| \hat \rho_{i,j} |a+k\> | \le  \left(
\frac{1+2\sigma^2}{4\sigma^4}
\right)^k.
\end{align}
\end{lemma}
\begin{proof}
Using the exact form for the matrix element $\<a| \hat \rho_{i,j} |a+k\>$ as given in Lemma \ref{lem:exact-form}
and Lemma \ref{lem:binomial-upper-bound} to bound the binomial coefficients therein, 
we get
\begin{align}
|\<a| \hat \rho_{i,j} |a+k\> | \le 
\frac{1}{1+2\sigma^2}
\sum_{i_2 = 0, \dots , {\rm min}(a,i)}  
\binom {a+k} {i_2+k} 
\binom {i+k} {i_2+k} 
y^{2i_2+k} x^{a+i+k} ,
\end{align} 
where $y = 1/(2\sigma^2)$ and $x = (2\sigma^2)/ (1+2\sigma^2)$.
Using Lemma \ref{lem:exact-form} again, we get
\begin{align}
|\<a| \hat \rho_{i,j} |a+k\> | \le (y/x)^k \<a +k | \hat \rho_{i+k} |a+k\>.
\end{align}
Using the fact that $ \hat \rho_{i+k}$ has unit trace, we easily get
$\sum_{a \ge 0 }|\<a| \hat \rho_{i,j} |a+k\> | \le  (y/x)^k = \left(
\frac{1+2\sigma^2}{4\sigma^4}
\right)^k$.
\end{proof}
We are now ready to prove the main result.
\begin{proof}[Proof of Theorem \ref{thm:mainresultofpaper}]
First we prove that without loss of generality, we can let the any two input states to our scheme $\rho$ and $\rho'$ be pure states. 
Now consider the case where $\rho$ and $\rho'$ are mixed states. Then both of these states can always be written as 
\begin{align}
\rho = \sum_{j\ge 1} p_j |\phi_j\>\<\phi_j|,\quad
\rho' = \sum_{j\ge 1} p'_j |\phi'_j\>\<\phi'_j|,
\end{align}
such that $p_j = p_j'$ for every $j\ge 1$. In this decomposition, the states 
$|\phi_j\>$ and $|\phi_k\>$ need not be distinct even when $j \neq k$. 
Similarly, 
$|\phi'_j\>$ and $|\phi'_k\>$ need not be distinct even when $j \neq k$. 
Here, we must have $p_j$ to be non-negative and $\sum_{j \ge 1} p_j = 1.$
Then we use the linearity of the quantum channel $\mathcal E$ to see that
\begin{align}
 \mathcal E (\rho ) - \mathcal E(\rho') 
 = 
\sum_{j\ge 1} p_j \left( \mathcal E (  |\phi_j\>\<\phi_j| ) - \mathcal E( |\phi'_j\>\<\phi'_j| ) \right).
\end{align}
Applying the triangle inequality for the trace norm, we get
\begin{align}
\| \mathcal E (\rho ) - \mathcal E(\rho') \|_1
 \le
\sum_{j\ge 1} p_j \left\| \mathcal E (  |\phi_j\>\<\phi_j| ) - \mathcal E( |\phi'_j\>\<\phi'_j| ) \right\|_1.
\end{align}
It hence follows that 
\begin{align}
\| \mathcal E (\rho ) - \mathcal E(\rho') \|_1
 \le
\max_{j\ge 1} \left\| \mathcal E (  |\phi_j\>\<\phi_j| ) - \mathcal E( |\phi'_j\>\<\phi'_j| ) \right\|_1. \label{eq:maximization-over-pure-states}
\end{align}
From \eqref{eq:maximization-over-pure-states}, we can see that we can maximize over the trace distance between encrypted pure states to maximize $\| \mathcal E (\rho ) - \mathcal E(\rho') \|_1$.
It thus suffices to consider $\rho=|\phi\>\<\phi|$ and $\rho'=|\phi'\>\<\phi'|$ to be pure states in this security proof,
where $|\phi\>= \sum_{i\ge 0} \lambda_i |i\>$ and 
$|\phi'\>= \sum_{i\ge 0} \mu_i |i\>$.
We make this assumption with loss of generality in the remainder of this proof.

Consider the matrix
\begin{align}
T = \mathcal E (\rho ) - \mathcal E(\rho') 
=  
\sum_{a,b\ge 0}
\sum_{i=0}^n \sum_{j=0}^n
\left(
\lambda_i \lambda_j^* -\mu_i \mu_j^* 
\right)
I_{a,b,i,j}|a\>\<b|.\label{eq:diff-matrix}
\end{align}
Now let $D$ be the diagonal component of $T$ and $O$ be the off-diagonal component of $T$. 
By the triangle inequality, we will have $\| T \|_1  = \| D+O\|_1  \le \| D\|_1 + \| O \|_1$.
We now proceed to bound the diagonal component. 

By definition, we have
\begin{align}
D = \sum_{a \ge 0 }\sum_{i,j=0}^n \left(
\lambda_i \lambda_j^* -\mu_i \mu_j^* 
\right)I_{a,a,i,j} |a\>\<b|.
\end{align}
Using Lemma \ref{lem:exact-form}, we know that $I_{a,b,i,j} = 0$ whenever $i \neq j$. Therefore, the above expression for $D$ simplifies to yield
\begin{align}
D = \sum_{a \ge 0 }\sum_{i=0}^n \left(
|\lambda_i|^2  -|\mu_i|^2
\right)I_{a,a,i,j} |a\>\<b|
=
\sum_{a \ge 0 }\sum_{i=0}^n \left(
|\lambda_i|^2  -|\mu_i|^2
\right) \<a| {\hat \rho}_{i}|a\>.
\end{align}
Hence it follows that
\begin{align}
\|D\|_1 
=
\| \mathcal E(\omega) - \mathcal E(\omega')\|_1,
\end{align}
where 
\begin{align}
\omega = \sum_{i=0}^n |\lambda_i|^2 |i\>\<i|,\quad
\omega' = \sum_{i=0}^n |\mu_i|^2 |i\>\<i|.
\end{align}
Since $\omega$ and $\omega'$ are mixed states that are diagonal in the Fock basis,
we can use \eqref{eq:maximization-over-pure-states} to see that
\begin{align}
\|D\|_1 
\le \max_{0\le i<j\le n} \| \mathcal E(|i\>\<i|) - \mathcal E(|j\>\<j|)\|_1
= \| \hat \rho_i - \hat \rho_j \|_1.
\end{align}
Using Lemma \ref{lem:any-diagonal-trace-distance}, 
\begin{align}
\|D\|_1 \le 
\frac{n}{2\sigma^2} + \frac{1}{\sigma^2} 
\left( 1 + \frac{1}{2\sigma^2}\right)^n.
\end{align}

For the off-diagonal elements we can use the Gersgorin Circle Theorem (GCT).
First, note that for any $i$ and $j$, 
$|\lambda_i \lambda_j^* -\mu_i \mu_j^*|\le 2$.
From the GCT the 1-norm of $O$ is at most the sum of the absolute values of all of its matrix elements. Notice that 
\begin{align}
 O 
&=
 \sum_{a\ge 0, k \ge 1}
\sum_{i=0}^n \sum_{j=0}^n
\left(
\lambda_i \lambda_j^* -\mu_i \mu_j^* 
\right)
I_{a,a+k,i,j}|a\>\<a+k|
+
\sum_{a\ge 0, k \ge 1}
\sum_{i=0}^n \sum_{j=0}^n
\left(
\lambda_i \lambda_j^* -\mu_i \mu_j^* 
\right)
I_{a+k,a,i,j}|a+k\>\<a|
\end{align}
Hence we obtain from the GCT that
\begin{align}
\|O\|_1
&\le
\sum_{a\ge 0, k \ge 1}
\left| \sum_{i=0}^n \sum_{j=0}^n
\left(
\lambda_i \lambda_j^* -\mu_i \mu_j^* 
\right)
I_{a,a+k,i,j}
\right|
+
\sum_{a\ge 0, k \ge 1}
\left|
\sum_{i=0}^n \sum_{j=0}^n
\left(
\lambda_i \lambda_j^* -\mu_i \mu_j^* 
\right)
I_{a+k,a,i,j}
\right|
\notag\\
&\le
2 \sum_{a\ge 0, k \ge 1}
\sum_{i=0}^n \sum_{j=0}^n
\left( 
|I_{a,a+k,i,j}| + |I_{a+k,a,i,j}|
\right),
\end{align}
where we have used the triangle inequality in the second inequality above.
Using the fact that $I_{a,a+k,i,j}$ is only non-zero when $j-i=k$, and similarly for $I_{a+k,a,i,j}$, we get 
\begin{align}
\|O\|_1
&\le
2 \sum_{a\ge 0, k \ge 1}
\sum_{i=0}^n 
\left( 
|I_{a,a+k,i,i+k} |+ |I_{a+k,a,i+k,i}|
\right).
\end{align}
Combining this with the fact that 
$I_{a,a+k,i,i+k} = I_{a+k,a,i+k,i}^*$ we get
\begin{align}
\|O\|_1
&\le
4 \sum_{a\ge 0, k \ge 1}
\sum_{i=0}^n 
\left| 
I_{a,a+k,i,i+k}
\right|.
\end{align}
Using Lemma \ref{lem:off-diagonal-bound} with $\sigma^2 > 0$ for the geometric sum,
this becomes
$\|O\|_1 \le 4 \sum_{i=0}^n \sum_{k \ge 1 } \sigma^{-2k} \le 
\frac{4(n+1) \sigma^{-2} }{1-\sigma^{-2}} 
\le 
8(n+1) \sigma^{-2}$.
The result then follows.
\end{proof}

\section{Multi-mode security}
The security of our scheme on multiple modes arises from a telescoping sum on the modes, when the multi-mode state is a separable state. 
To see this explicitly, 
let the encryption operator on $m$ modes be $\mathcal E$. 
Because the random displacements are chosen independently for every mode, 
we have 
\[{\mathcal E} = {\mathcal E_1} \otimes  \dots\otimes   {\mathcal E_m},\]
where $\mathcal E_j$ denotes an encryption operator on the $j$th mode.
Now let $\mathcal I$ denote the identity channel on a single mode.
Then for any two $m$-mode states $\rho = \rho_1 \otimes \dots \otimes \rho_m$ and $\tau_1 \otimes \dots \otimes \tau_m$ with a tensor product structure, we can write
\begin{align}
&(\mathcal E_1 \otimes  \dots\otimes   \mathcal E_m)(\rho)
-
(\mathcal E_1 \otimes  \dots\otimes  \mathcal E_m)(\tau)\notag\\
&\mathcal E_1(\rho_1) \otimes  \dots\otimes   \mathcal E_m(\rho_m)
-
\mathcal E_1(\tau_1) \otimes  \dots\otimes  \mathcal E_m(\tau_m)\notag\\
=&
A_1 \otimes \dots \otimes A_m - B_1 \otimes \dots \otimes B_m  
\end{align}
where $A_j = \mathcal E_j(\rho_j)$
and
$B_j = \mathcal E_j(\tau_j)$.
Using the telescoping sum, we have
\begin{align}
A_1 \otimes \dots \otimes A_m - B_1 \otimes \dots \otimes B_m 
=&  (A_1 \otimes  A_2 \otimes \dots \otimes A_m 
- B_1 \otimes A_2 \otimes \dots \otimes A_m) \notag\\
&+ 
(B_1\otimes A_2 \otimes A_3 \dots  \otimes A_m 
- B_1 \otimes B_2\otimes  A_3 \otimes \dots \otimes A_m) \notag\\
&+
\dots \notag\\
&+
(B_1 \otimes \dots \otimes B_{m-1} \otimes A_m 
- B_1 \otimes \dots \otimes B_m).
\end{align}
By applying the triangle inequality for the trace norm of each of the above bracketed terms, then we get
\begin{align}
\| A_1 \otimes \dots \otimes A_m - B_1 \otimes \dots \otimes B_m \|_1
\le &  \|(A_1 \otimes  A_2 \otimes \dots \otimes A_m 
- B_1 \otimes A_2 \otimes \dots \otimes A_m) \|_1 \notag\\
&+ 
\| (B_1\otimes A_2 \otimes A_3 \dots  \otimes A_m 
- B_1 \otimes B_2\otimes  A_3 \otimes \dots \otimes A_m)\|_1 \notag\\
&+
\dots \notag\\
&+
\| (B_1 \otimes \dots \otimes B_{m-1} \otimes A_m 
- B_1 \otimes \dots \otimes B_m)\|_1.
\end{align}
Using the multiplicativity of the trace norm under the 
tensor product and the fact that every quantum state has a trace norm equal to one so that $\|A_j\|_1 = \|B_j\|_1$,
we find that 
\begin{align}
\| A_1 \otimes \dots \otimes A_m - B_1 \otimes \dots \otimes B_m \|_1
\le &  \sum_{j=1}^m \|A_j - B_j\|_1 .
\end{align}
If every single mode state $\rho_j$ and $\tau_j$ have at most $n$ photons, and every mode is randomly displaced independently with displacement vector taken from a complex Gaussian distribution of standard deviation $\sigma$ and mean 0, 
using the above inequality, 
we can see that the trace distance between the encrypted states 
$A_1\otimes \dots \otimes A_m$ and 
$B_1\otimes \dots \otimes B_m$ 
is simply at most $m$ times of the trace distance between arbitrary displacement-encrypted single mode states with at most $n$ photons.

\section{Bounded photon number and bounded energy}
In this section, we prove that a quantum state with bounded expected energy is well approximated by a quantum state with a bounded number of photons. 
Using $\hat n = \sum_{n\ge 0} n|n\>\<n|$ to denote the number operator, 
the expected energy of an arbitrary state $\rho$ supported on the Fock basis is
defined to be
\begin{align}
\tr ( \rho \hat n  )  = \sum_{n \ge 0}n  \< n | \rho|n \>.
\end{align}
First we prove a lemma that reduces the problem of bounding the trace-norm of the difference of density matrices to evaluating bounds on Euclidean norms, which is reminiscent of \cite[Lemma 2]{ouyang2019quantum}.
\begin{lemma} \label{lem:tracenorm-to-euclideannorm}
Let $\rho = \sum_{j} p_j |\phi_j\>\<\phi_j|$ and $\rho' = \sum_{j} p_j |\phi'_j\>\<\phi'_j|$ be density operators. Furthermore, let 
$|\epsilon_j\> =  |\phi'_j\> - |\phi_j \>$.
Then 
\begin{align}
\|\rho - \rho'\|_1 \le \sum_{j} p_j \left(  2\sqrt{\<\epsilon_j|\epsilon_j\>} +  \<\epsilon_j|\epsilon_j\> \right).
\end{align}
\end{lemma}
\begin{proof}
By definition, $|\phi'_j\> = |\phi_j\> + |\epsilon_j\>$. Hence,
\begin{align}
|\phi'_j\>\<\phi'_j|  - |\phi_j\>\<\phi_j|
=|\phi_j\>\<\epsilon_j|  + |\epsilon_j\>\<\phi_j| + |\epsilon_j\>\<\epsilon_j|.
\end{align}
By the triangle inequality, we have
\begin{align}
\| |\phi_j\>\<\epsilon_j|  + |\epsilon_j\>\<\phi_j| + |\epsilon_j\>\<\epsilon_j| \|_1
\le
\| |\phi_j\>\<\epsilon_j| \|_1 + \| |\epsilon_j\>\<\phi_j| \|_1
 + \| |\epsilon_j\>\<\epsilon_j| \|_1.
\end{align}
Now we use the definition of the trace norm where for any operator $A$, we have $\|A \|_1 = \max_U \{\tr(AU): \|U\|_\infty \le 1\}$ where $\|U\|_\infty$ denotes the maximum singular value of $U$. For Hermitian operators, it suffices to consider the maximization of $U$ over unitary operators. For $\|  |\epsilon_j\>\<\epsilon_j| \|_1 $, 
the optimal $U$ is just the identity operator, and hence 
$\| |\epsilon_j\>|<\epsilon_j| \|_1 = \<\epsilon_j|\epsilon_j\>$.
To evaluate $\| |\epsilon_j\>\<\phi_j| \|_1$, consider the unitary operator $U$ that swaps the normalized states 
$|\epsilon_j\> / \sqrt{\<\epsilon_j |\epsilon_j \>}$
and 
$|\phi_j\>$.
Then we can see $\| |\epsilon_j\>\<\phi_j| \|_1  =  \<\epsilon_j|\epsilon_j\> / \sqrt{ \<\epsilon_j|\epsilon_j\>}$.
A similar argument applies for evaluating $\| |\phi_j\>\<\epsilon_j| \|_1$, and the result follows.
\end{proof}
Now we present a lemma regarding approximating a pure state with another pure state that has at most $n$ photons.
\begin{lemma} \label{lem:bounded-support-approximant}
Let $|\phi\>$ be any pure state supported on the Fock basis that has expected energy $\mu$. Let $n$ be an integer such that $n \ge \mu$.
Then there exists a state $|\psi\>$ that has at most $n$ photons and 
$\| |\phi\> - |\psi\> \| \le 2 \sqrt{\mu/n}.$
\end{lemma}
\begin{proof}
In the Fock basis, we have
$|\phi\> = \sum_{j=0}^\infty \lambda_j |j\>$ where $\lambda_j$ are complex numbers and $\sum_{j=0}^\infty |\lambda_j|^2 = 1$.
Now consider $|\psi\> = \sum_{j=0}^n \lambda_j |j\> + x |n+1\>$ for some complex number $x$ such that 
$|x|^2 = 1-\sum_{j=0}^n |\lambda_j|^2$.
Hence it follows from the triangle inequality that
\begin{align}
\||\psi\> - |\phi\> \|
=
\||\epsilon\> - x|n+1\> \|
\le 
\||\epsilon\> \| + \| x|n+1\> \|,
\end{align}
where $|\epsilon\> = \sum_{j=n+1}^\infty \lambda_j |j\>$.
Since $\| x|n+1\> \| = |x|$ and 
\begin{align}
\||\epsilon\> \| = 
\sqrt{ \sum_{j \ge n+1}^\infty |\lambda_j|^2 }
=  \sqrt{|x|^2},
\end{align}
we get
\begin{align}
\||\psi\> - |\phi\> \|
\le  2|x| .
\end{align}
Now let $X$ denote a random variable that is equal to $j$ with probability $|\lambda_j|^2$. It then follows that 
\begin{align}
\Pr[X \ge n+1] = |x|^2.
\end{align}
By definition of the expected energy, we know that
\begin{align}
\mathbb E(X) = \sum_{j=0}^\infty j |\lambda_j|^2 = \mu.
\end{align}
Since $X$ is a non-negative random variable, we can use Markov's inequality, so that for any real number $b$ such that $b \ge 1$, we have
\begin{align}
\Pr[X \ge b \mu] \le \frac{1}{b}.
\end{align}
Since $n\ge \mu$, we have 
\begin{align}
\Pr[X \ge n+1] \le \frac{\mu}{n}.
\end{align}
Hence it follows that 
\begin{align}
\||\psi\> - |\phi\> \|
\le
2\sqrt{\mu/n}.
\end{align}
\end{proof}

\end{widetext}

\end{document}